\newcommand{\ignore}[1]{}
\newcommand{\onlyinproc}[1]{}
\newtheorem{thm}{Theorem}[section]
\newtheorem{theorem}{Theorem}[section]
\newtheorem{lemma}[thm]{Lemma}
\newtheorem{corollary}[thm]{ Corollary}
\newcommand{\dist}{\mathop{\rm dist}}
\newcommand{\CCC}{\mathop\text{{\sc cc}}}
\newcommand{\APS}{\mathop\text{{\sc aps}}}
\newcommand{\CV}{CV}
\newcommand{\W}{\mathop\text{{\sc W}}}
\newcommand{\var}{\mathop{\sf Var}}
\newcommand{\cov}{\mathop{\sf Cov}}
\newcommand{\Prob}{{\mathrm Pr}}
\newcommand{\minmed}{\mathop{\text{\sc MinMed}}}
\title{Average Distance Queries through Weighted Samples in Graphs and Metric
  Spaces: High Scalability with Tight Statistical Guarantees}
\author{
\alignauthor Shiri Chechik, Edith Cohen, Haim Kaplan \\
       \affaddr{Tel Aviv University, Israel}\\
       \affaddr{CA, USA}\\
       \email{edith@cohenwang.com}
}
\author{
Shiri Chechik \thanks{Tel Aviv University, Israel
  {\tt schechik@cs.tau.ac.il},
  {\tt edith@cohenwang.com},  {\tt haimk@post.tau.ac.il}} \and Edith
Cohen$^*$\thanks{Google Research, CA, USA} \and  Haim Kaplan$^*$
}
\author[1]{Shiri Chechik}
\author[1,2]{Edith Cohen}
\author[1]{Haim Kaplan}
\affil[1]{Tel Aviv University\\
Israel\\
 \texttt{\{schechik,haimk\}@cs.tau.ac.il}}
\affil[2]{Google Research\\
 Mountain View, CA, USA\\
  \texttt{edith@cohenwang.com}}
\authorrunning{S.\,Chechik and E.\,Cohen and H.\,Kaplan} 
\begin{document}
\maketitle

\begin{abstract}
The average distance from a node to all other nodes in a graph, or
from a query point in a metric space to a set of points, is a
fundamental quantity in data analysis.   The inverse of the average
distance, known as the (classic) closeness centrality of a node, is a
popular importance measure in the study of social networks.
We develop novel structural
insights on the sparsifiability of the distance relation via weighted
sampling.  Based on that, we present  highly
practical algorithms with strong statistical guarantees for
fundamental problems.
We show that the average distance (and hence the centrality) for all
nodes in a graph can be estimated using $O(\epsilon^{-2})$ single-source distance computations.
For a set $V$ of $n$ points in a metric space, we show that after preprocessing which uses
$O(n)$ distance computations  we can compute a weighted sample
$S\subset V$ of size
$O(\epsilon^{-2})$ such that  the average distance
from any query point $v$ to $V$ can be estimated from the distances
from $v$ to $S$.
Finally, we show that for a set of points $V$ in a
metric space, we can estimate
the average pairwise distance 
using $O(n+\epsilon^{-2})$ distance computations.  The estimate is
based on a weighted sample of $O(\epsilon^{-2})$ pairs of points,
which is computed using $O(n)$ distance computations.
 Our estimates are unbiased with normalized mean square error (NRMSE)
 of at most $\epsilon$.  Increasing the sample size by a
$O(\log n)$ factor ensures that the probability that the relative error exceeds
$\epsilon$ is polynomially small.  
\end{abstract}

\section{Introduction}

Measures of structural centrality  based on shortest-paths distances,
first studied by   Bavelas~\cite{Bavelas:HumanOrg1948},
are  classic tools in the analysis of social networks and other
graph datasets.
One natural measure of the importance of a node in a network is
its {\em classic closeness centrality}, defined as the
inverse of its  average distance to all other nodes.
This centrality measure, which is
also termed {\em Bavelas closeness centrality} or
the {\em Sabidussi Index}~\cite{Freeman:sociometry1977,Freeman:sn1979,wassermansocialnetworksbook}, was proposed by
 Bavelas~\cite{Bavelas:Acous1950}, Beauchamp~\cite{Beauchamp:BS1965},
 and Sabidussi~\cite{Sabidussi:psychometrika1966}.
Formally,  for a graph  $G = (V,E)$ with
$|V|=n$ nodes,  the classic closeness centrality of $v\in V$ is
\begin{equation} \label{Bclosenesscdef}
\CCC(v) = \frac{n-1}{\sum_{u \in V} \dist(u,v)}\ ,
\end{equation}
where $\dist(u,v)$ is the length of a shortest path between $v$ and $u$
in $G$ and $n$ is the number of nodes.  Intuitively, this measure of
centrality reflects the ability of a node to send goods to all
other nodes.

 In metric spaces, the average distance of a point $z$ to a set $V$ of
 $n$ points,  $\sum_{x\in V} \dist(z,x)/n$, is
a fundamental component  in some clustering and classification tasks.  For
 clustering, the quality of a cluster can be measured by the sum of
 distances from a centroid (usually 1-median or the mean in Euclidean data).
  Consequently, the (potential) relevance
of a query point to
 the cluster can be estimated by relating its average distance to the
 cluster points to that of the center or more generally, to the distribution of the average
 distance of each cluster point to all others.  This classification method has
 the advantages of being non-parametric (making no distribution
 assumptions on the data), similarly to the popular
 $k$ nearest neighbors
 \cite{CoverHartkNN:InfoTheory1967} (kNN)  classification.
Average distance based classification
 complements kNN, in that it targets
settings where the
outliers in the labeled points do carry information that should be incorporated in the classifier.
 A recent study \cite{HMY:Neural2012} demonstrated that this is the
 case for  some data  sets in the UCI repository, where
 average distance based classification is much more accurate than
kNN classification.

These notions of centrality and average distance had been extensively
used in the analysis of social networks and metric data sets.  We aim
here to provide better tools to facilitate the computation of these measures on very large data
sets.  In particular, we present estimators with tight statistical guarantees whose computation is highly  scalable.

We consider inputs that are either in the form of an undirected graph (with
nonnegative edge weights) or a set of points in a metric
space.  In case of graphs, distance of  the underlying metric
correspond to lengths of shortest paths.  Our results also extend to
inputs specified as directed strongly connected graphs where the
distance are the round trip distances \cite{classiccloseness:COSN2014}.
We use a unified
notation where $V$ is the set of nodes if the input is a graph, or the set of points in a metric space.
We denote $|V| = n$. We use graph terminology, and mention metric spaces only when there is a difference between the two applications.
  We find it convenient to work with the sum of
distances $$\W(v)= \sum_{u\in V} \dist(v,u)\ .$$
Average distance is then simply $\W(v)/n$ and centrality is
$\CCC(v)=(n-1)/\W(v)$.
Moreover,
 estimates $\hat{\W}(v)$ that are within  a small relative error, that
 is  $(1-\epsilon)\W(u) \leq \hat{\W}(u) \leq  (1+ \epsilon) \W(u)$, imply a
 small relative error on the average distance, by taking
 $\hat{\W}(v)/n$, and for centrality $\CCC(v)$, by taking
$\hat{\CCC}(v) = (n-1)/\hat{\W}(v)$.

 We list the fundamental computational  problems related to these measures.
\begin{itemize}
\item
{\em All-nodes sums}:  Compute $\W(v)$ of all $v\in V$.
\item
{\em Point queries (metric space):}
  Preprocess a set of points $V$ in a metric space, such that given a
  query point $v$ (any point in the metric space, not necessarily
  $v\in V$), we can quickly compute $\W(v)$.
\item
{\em 1-median}: Compute the node $u$ of
maximum centrality or equivalently, minimum $\W(u)$.
\item
{\em All-pairs sum}: Compute the sum of the distances between all
pairs, that is $\APS(V) \equiv \frac{1}{2} \sum_{v\in V} \W(v)$.
\end{itemize}

In metric spaces, we seek algorithms that 
compute distances for a small number of pairs of points.
 In graphs, a distance computation between a specific pair of nodes
 $u,v$ seems to be computationally equivalent in the worst-case to
 computing all distances from a single source node (one of the nodes)
 to all other nodes.
 Therefore,  we  seek algorithms that perform a small number of
 single-source shortest paths (SSSP)
 computations. An SSSP computation in a graph
 can be performed  using Dijkstra's algorithm in time
 that is nearly linear in the number of edges \cite{FreTa87}.
To support parallel
 computation, it is also desirable
 to reduce dependencies between the distance  or single-source
 distance computations.

 The best known exact algorithms for the problems that we listed above do not
 scale well.  To compute $\W(v)$ for all $v$,  all-pairs sum, and 1-median,
  we need to compute the distances between
all pairs of nodes,  which in graphs is equivalent to an all-pairs shortest paths
(APSP) computation. To answer point queries, we need to compute the distances from the query
 point to all points in $V$.  In graphs, the hardness of some of these problems was formalized by
 the notion of subcubic equivalence \cite{WilliamsW:FOCS10}.
Abboud et al \cite{AGV:SODA2015} showed that
  exact 1-median is subcubic equivalent to APSP and therefore is
  unlikely to have a near linear time solution.  We apply a similar
  technique and show  (in
  Section \ref{hardness:sec}) that the all-pairs sum problem is also
   subcubic equivalent to APSP.  In general metric spaces,
exact all pairs sum or 1-median clearly requires
$\Omega(n^2)$ distance computations.\footnote{Take a symmetric distance matrix with all
  entries in $(1-1/n,1]$.  To determine the 1-median we need to
  compute  the exact sum of entries in each raw, that is, to exactly
  evaluate all entries in the raw.  This is because an unread entry of
  $0$ in any raw would determine the 1-median.
Similarly, to compute the exact sum of
  distances we need to evaluate all entries.  Deterministically, this amounts to ${n \choose
  2}$ distance computations.}

  Since exact computation does not  scale to very large data sets,
work in the area focused on  approximations with  small relative errors.
  We measure approximation quality by  the normalized root mean square error
(NRMSE), which is the square root of the expected (over randomization
used in the algorithm) square difference
between the estimate and the actual value, divided by the mean.  When
the estimator is unbiased (as with sample average), this is the ratio between the standard deviation and the mean, which is called the
coefficient of variation (\CV). Chebyshev's inequality implies that the probability that the estimator is within
a relative error of $\eta$ from its mean is at least $1-(\CV)^2/(\eta)^2$. Therefore
a \CV\
 of $\epsilon$ implies that the estimator is within a relative error of $\eta = c\epsilon$ from its mean
with
 probability $\ge 1-1/c^2$.


 The sampling based estimates that we
consider are also well concentrated, meaning roughly that the probability of a larger
error decreases exponentially with sample size. With concentration, by increasing the
sample size by a factor of $O(\log n)$ we  get that the probability
that the relative error exceeds $\epsilon$,  for any one of polynomially many queries, is
polynomially small. In particular, we can
estimate the sum of the distances of the 1-median from all other nodes up to a relative
error of $\epsilon$ with a polynomially small error probability.

\subsection*{Previous work}
We review previous work on scalable approximation of 1-median,
all-nodes sums, and all-pairs sum. These problems were studied in metric spaces and graphs.
A natural approach to approximate the centrality of nodes is to take a uniform sample $S$ of
nodes, perform $|S|$ single source distance computations  to
determine all distances from every $v\in S$ to every $u\in V$, and then estimate
$\W(v)$ by $\hat{\W}(v) = \frac{n}{|S|} \W_S(v)$, where
$\W_S(v) = \sum_{a\in S} \dist(v,a)$ is the sum of the distances from $v$ to the nodes of $S$.
This approach was used by Indyk \cite{Indyk:phd} to compute a $(1+\epsilon)$-approximate
1-median in a metric space using only $O(\epsilon^{-2}n)$ distance
computations (See also \cite{Indyk:stoc1999} for a similar result with
a weaker bound.).  We discuss this uniform sampling approach in more
detail in Section \ref{uniformsamp:sec}, where for completeness, we
show how it can be applied to the all-nodes sums problem.

The sample average of a uniform sample  was also used to estimate all-nodes
centrality~\cite{EW_centrality:SODA2001} (albeit with weaker, additive
guarantees) and to
experimentally identify the (approximate) top $k$ centralities~\cite{OkamotoCL:FAW2008}.   When the distance
distribution is heavy-tailed, however, the sample average as an
estimate of the true average can have a large relative error.  This is
because the sample may miss out on the few far nodes that dominate $\W(v)$.

 Recently, Cohen et al \cite{classiccloseness:COSN2014} obtained $\epsilon$ NRMSE
 estimates for $\W(v)$ for any $v$,
using single-source distance computations from each
 node in a uniform sample of $\epsilon^{-3}$ nodes.
Estimates that are within a relative error of $\epsilon$ for all
 nodes were obtained using $\epsilon^{-3}\log n$ single-source computations. This approach applies in any metric space.
The estimator for a point $v$ is obtained by using
the average of the distances from $v$ to a uniform sample for
  nodes which are ``close'' to $v$ and estimating distances to
nodes ``far'' from $v$ by their distance to
the sampled node closest to $v$.  The resulting estimate is biased,
but obtains small relative errors using essentially the information of
single-source distances from a uniform sample.

For the all-pairs sum problem in metric spaces, Indyk
\cite{Indyk:stoc1999} showed that it can be estimated by scaling up
the average of  $\tilde{O}(n \epsilon^{-3.5})$ distances between pairs
of points selected uniformly at random. The estimate has a relative error of at most $\epsilon$ with constant probability.
Barhum, Goldreich, and Shraibman  \cite{AveDist2007} improved Indyk's
bound and showed that a uniform sample of   $O(n \epsilon^{-2})$
distances suffices and also argued that this sample size is necessary (with uniform sampling).
Barhum et al.\ also showed that in an Euclidean space a similar approximation can be obtained by projecting the points onto $O(1/\epsilon^2)$ random directions and averaging the distances between all pairwise projections.
Goldreich and Ron \cite{GoldreichR08} showed that in an unweighted
graph $O(\epsilon^{-2} \sqrt{n})$  distances between random pairs of points suffice to estimate the sum  of all pairwise distances,   within  a relative error of $\epsilon$, with  constant probability.
They also showed that  $O(\epsilon^{-2}\sqrt{n})$  distances from a fixed node $s$ to random nodes $v$ suffice to estimate $\W(v)$,
  within a relative error of $\epsilon$, with  constant probability.
A difficulty with using this result, however, is that in graphs it is
expensive to compute distances between random pairs of points in a
scalable way:  typically a single distance between a particular pair of nodes $s$ and $t$  is not easier to obtain
than a complete single source shortest path tree from $s$.

\subsection*{Contributions and overview}
  Our design is based on computing a single {\em weighted}  sample that provides
  estimates with statistical guarantees for all nodes/points.   A
sample of size $O(\epsilon^{-2})$ suffices to obtain estimates $\hat{\W}(z)$ with
  a \CV\ of $\epsilon$ for any $z$.  A sample of size $O(\epsilon^{-2} \log n)$ suffices
for ensuring a relative error of at most $\epsilon$ for all nodes in   a graph or for polynomially
  many queries in a metric space, with probability that is at least $1-1/poly(n)$.

  The sampling algorithm is provided in Section~\ref{build:sec}.
 This algorithm computes a {\em coefficient} $\gamma_v$ for each $v\in V$ such that $\sum_v \gamma_v
  = O(1)$. Then for
 a parameter $k$, we obtain sampling probabilities  $p_u \equiv
\min\{1, k \gamma_v\}$ for $u \in V$.
Using the probabilities $p_v$, we can obtain a Poisson sample $S$ of expected size
  $\sum_u p_u = O(k)$ or a VarOpt sample \cite{varopt_full:CDKLT10} that has exactly that size (rounded to an integer).

 We present our estimators in Section \ref{estimators:sec}.
For each node $u$, the inverse probability estimator
$\widehat{\dist}(z,u)$ is equal to  $\dist(z,u)/p_u$
if $u$ is sampled and is $0$ otherwise. Our estimate of the sum $\W(z)$
is the sum of these estimates
\begin{equation} \label{centsampest}
\hat{\W}(z) = \sum_{u\in V} \widehat{\dist}(z,u) = \sum_{u\in S}
\widehat{\dist}(z,u) = \sum_{u\in S} \frac{\dist(z,u)}{p_u}\ .
\end{equation}
Since $p_u>0$ for all $u$, the estimates $\widehat{\dist}(z,u)$ and
hence the estimate $\hat{\W}(z)$ are unbiased.

 We provide a detailed analysis in Section \ref{analysis:sec}.
We will show that our sampling probabilities provide the following guarantees.
When choosing
$k = O(\epsilon^{-2})$,
 $\hat{\W}(z)$ has \CV\ $\epsilon$.
Moreover,
the estimates have good concentration, so using a larger sample size
of $O(\epsilon^{-2} \log n)$ we obtain that the relative error is at
most  $\epsilon$ for all nodes $v\in V$
with probability at least $1-1/poly(n)$.

  In order to obtain a sample with such guarantees for some particular node $z$,
the sampling probability of a node $v$ should be (roughly) proportional to its distance
$\dist(z,v)$ from $z$. 
Such a  Probability Proportional to Size (PPS)
sample of size $k=\epsilon^{-2}$ uses coefficients $\gamma_v =
\dist(v,z)/\W(z)$ and has \CV\ of $\epsilon$.  We will work
with {\em approximate PPS} coefficients, which we define as satisfying
$\gamma_v \geq c \dist(v,z)/\W(z)$ for some constant $c$.  With
approximate PPS we obtain a \CV\ of $\epsilon$ with a sample
of size $O(\epsilon^{-2})$.
It is far from clear apriori, however,  that there is a single set of
{\em universal PPS}   coefficients
which are simultaneously (approximate) PPS  for all
nodes and are of size $\sum_v \gamma_v = O(1)$. That is, a single
sample of size $O(\epsilon^{-2})$, which is independent of $n$ and of the
dimension of the space, would work for all nodes.

  Beyond establishing the existence of universal PPS coefficients,
we are interested in obtaining them, and the sample
  itself, using a near-linear computation.
 The dominant component of the computation of the sampling
 coefficients is  performing $O(1)$  single-source distance computations.
  Therefore, it requires $O(m\log n)$ time in graphs and
 $O(n)$ pairwise distance queries in a metric space.
 A universal PPS sample of any given size $k$ can then be computed in a single
 pass over the coefficients vector $\boldsymbol{\gamma}$ ($O(n)$ computation).  We represent the sample $S$ as a
 collection $\{(u,p_u)\}$ of nodes/points
and their respective sampling probabilities.  We can then use our
sample for estimation using \eqref{centsampest}.

When the input is a graph,  we compute single-source distances from each
node in $S$ to all other nodes in order to estimate $\W(v)$ of all
$v\in V$.  This requires $O(|S|m\log n)$ time and  $O(n)$ space.
\begin{theorem} \label{allnodes:thm}
All-nodes sums ($\W(v)$ for all $v\in V$) can be estimated
unbiasedly as follows:
\begin{itemize}
\item
With \CV\
$\epsilon$, using $O(\epsilon^{-2})$ single source distance
computations.
\item
When using $O(\epsilon^{-2}\log n)$ single source distance
computations, the probability that 
the maximum relative error, over the $n$ nodes, exceeds $\epsilon$ is
polynomially small.  
$$\Pr\left[\max_{z\in V} \frac{|\hat{\W}(z)-\W(z)|}{\W(z)} > \epsilon\right] < 1/poly(n)\ .$$
\end{itemize}
\end{theorem}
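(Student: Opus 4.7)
The plan is to reduce the theorem to properties of the weighted sample that are already developed in the preceding sections. Specifically, I will assume that the sampling procedure of Section \ref{build:sec} produces coefficients $\gamma_v$ with $\sum_v \gamma_v = O(1)$ that are simultaneously approximate PPS for every query point $z$, meaning that there is an absolute constant $c>0$ such that $\gamma_v \geq c\cdot \dist(z,v)/\W(z)$ for all $z,v\in V$. Given this, the sampling probabilities $p_u = \min\{1,k\gamma_u\}$ define a Poisson (or VarOpt) sample $S$ of expected size $\sum_u p_u = O(k)$, from which we run one SSSP computation per sampled node. The computational cost claim in the theorem then follows directly: $O(1)$ SSSP for building $\boldsymbol{\gamma}$ plus $|S|=O(k)$ SSSP for the distances, giving $O(\epsilon^{-2})$ or $O(\epsilon^{-2}\log n)$ SSSP computations for $k=\Theta(\epsilon^{-2})$ or $k=\Theta(\epsilon^{-2}\log n)$ respectively.

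For the statistical claims I would fix an arbitrary $z\in V$ and analyze $\hat{\W}(z) = \sum_{u\in S} \dist(z,u)/p_u$. Unbiasedness follows from $p_u>0$ for all $u$, as already noted in \eqref{centsampest}. For the variance, since each inverse-probability estimator $\widehat{\dist}(z,u)$ has variance $\dist(z,u)^2(1-p_u)/p_u \leq \dist(z,u)^2/p_u$, the approximate PPS property bounds each contribution by
\[
\frac{\dist(z,u)^2}{p_u} \leq \frac{\dist(z,u)^2}{c\,k\,\dist(z,u)/\W(z)} = \frac{\dist(z,u)\,\W(z)}{c\,k}
\]
whenever $p_u<1$ (and the term vanishes when $p_u=1$). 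Summing over $u\in V$ and using $\sum_u \dist(z,u)=\W(z)$ yields $\var[\hat{\W}(z)] \leq \W(z)^2/(ck)$ (for Poisson; VarOpt only decreases variance), giving $\CV \leq 1/\sqrt{ck}$. Choosing $k=\Theta(\epsilon^{-2})$ delivers the first bullet.

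For the high-probability bound I would use that each sampled contribution $\dist(z,u)/p_u$ is bounded by $\W(z)/(ck)$ (again by approximate PPS), so $\hat{\W}(z)$ is a sum of independent, nonnegative, bounded random variables with mean $\W(z)$ and variance $O(\W(z)^2/k)$. A standard Bernstein/Chernoff inequality then gives
\[
\Pr[\,|\hat{\W}(z)-\W(z)|>\epsilon\W(z)\,] \leq 2\exp(-\Omega(\epsilon^2 k)).
\]
With $k=\Theta(\epsilon^{-2}\log n)$ and a large enough constant, this is at most $n^{-c'}$ for any desired $c'$, and a union bound over the $n$ nodes yields the second bullet. For the VarOpt variant one uses negative association of the sample indicators so that the same Chernoff-type bounds apply.

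The main obstacle I anticipate is that the argument rests entirely on the universal approximate-PPS property of $\boldsymbol{\gamma}$ together with $\sum_v\gamma_v=O(1)$; once these are in hand, the analysis is a textbook variance plus concentration calculation. The nontrivial work, which must be done in Section \ref{build:sec}, is constructing such $\gamma_v$ simultaneously for all $z$ with only $O(1)$ SSSP computations—verifying this structural fact is what really drives the theorem, whereas the steps above are essentially bookkeeping on top of it.
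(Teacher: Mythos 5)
Your variance and concentration calculations reproduce the paper's Lemmas \ref{varq:lemma} and \ref{concentration:lemma}, but the proposal assumes away exactly the part that carries the theorem, and in a form that is too strong. You posit that the $O(1)$ single-source computations of Algorithm \ref{centsamp:alg} yield coefficients with $\gamma_v \geq c\,\dist(z,v)/\W(z)$ for \emph{all} $z,v$ with an absolute constant $c$. The paper does not establish this deterministically from $O(1)$ base nodes, and it cannot: what it proves is weaker and randomized. If $u\in S_0$ is the $(qn)$th closest node to $z$, then $\gamma_v \geq \frac{1-q}{4}\,\dist(z,v)/\W(z)$ (Lemma \ref{ppsq:lemma}), so the ``constant'' degrades as $q\to 1$ and depends on the random draw of $S_0$. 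The first bullet is then obtained by letting $S_0$ contain a uniform sample of $b\geq 2$ nodes and bounding the \emph{expected} conditional variance over the rank of the closest base node to $z$: the conditional variance grows like $1/(1-q)$ while the probability that the closest base node has rank $qn$ decays like $b(1-q)^{b-1}$, and the resulting integral converges precisely when $b\geq 2$ (Lemma \ref{var:lemma}); the \CV\ is over the joint randomness of $S_0$ and the sample. This graceful-degradation-plus-averaging step has no counterpart in your argument. Note also that the obvious patch, ``universal PPS holds with constant probability, otherwise fall back to $\gamma_v\geq 1/n$,'' does not rescue the unconditional \CV\ claim: on the constant-probability failure event the variance can be of order $n\W(z)^2/k$, which swamps the $\epsilon^2\W(z)^2$ target unless the base set has $\Omega(\log n)$ nodes, i.e.\ more than the $O(\epsilon^{-2})$ budget allows in general.

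For the second bullet your outline (per-node Chernoff--Hoeffding, then a union bound over the $n$ nodes) is the paper's route, but the absolute-constant universal PPS property is obtained there only with probability $1-1/poly(n)$, and obtaining it is again where the work lies: one needs well-positioned nodes (Lemmas \ref{mostwellpos:lemma}, \ref{wellpos:lemma}, \ref{ppswp:lemma}, giving the constant $1/18$), the fact that at least half the nodes are well positioned, and either $O(\log n)$ uniformly sampled base nodes or an $O(\log n)$-SSSP identification step; this fits the $O(\epsilon^{-2}\log n)$ budget of the second bullet but is exactly what is unavailable for the first. One smaller technical point: your claim that every sampled contribution is bounded by $\W(z)/(ck)$ holds only for nodes with $p_u<1$; the nodes with $p_u=1$ contribute deterministically and must be split off before applying the Hoeffding bound, as in Lemma \ref{concentration:lemma}. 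In short, the proposal is correct bookkeeping conditioned on the structural fact, but the structural fact as you state it is both unproven and strictly stronger than what the construction provides in the $O(\epsilon^{-2})$ regime, so the first bullet in particular does not follow from your argument.
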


 In a metric space, we can estimate $\W(x)$ for an arbitrary query point
 $x$, which is not necessarily a member of $V$, by computing the distances $\dist(x,v)$ for all $v\in S$ and
 applying the estimator \eqref{centsampest}.  Thus, point queries in a
 metric space require $O(n)$ distance computations for preprocessing and
 $O(\epsilon^{-2})$ distance computations per query.
\begin{theorem}  \label{querypoint:thm}
We can preprocess a set of points $V$ in a metric space using $O(n)$
time and $O(n)$
distance computations ($O(1)$ single source distance computations)
to generate a weighted sample $S$ of a desired size $k$.
From the sample, we can unbiasedly estimate $\hat{\W}(z)$
using the distances between $z$ and the points in $S$ with the
following guarantees:
\begin{itemize}
\item
When $k=O(\epsilon^{-2})$, for any point query $z$, $\hat{\W}(z)$ has
\CV\ at most  $\epsilon$.
\item
When $k=O(\epsilon^{-2}\log n)$,
the probability that the relative error of $\hat{\W}(z)$  exceeds $\epsilon$ for
is polynomially small:
$$\Pr\left[\frac{|\hat{\W}(z)-\W(z)|}{\W(z)} > \epsilon\right] < 1/poly(n)\ .$$
\end{itemize}
\end{theorem}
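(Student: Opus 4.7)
The plan is to invoke the preprocessing of Section~\ref{build:sec} to obtain universal PPS coefficients $\gamma_v$ for $V$, derive sampling probabilities $p_v = \min\{1, k \gamma_v\}$ for the target sample size $k$, and then argue that the Horvitz--Thompson estimator \eqref{centsampest} satisfies the stated bounds when $z$ ranges over arbitrary points of the metric space, and not only over $z \in V$ as in Theorem~\ref{allnodes:thm}.

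The preprocessing accounting is immediate: the algorithm of Section~\ref{build:sec} produces the coefficients using $O(1)$ single-source distance computations (from a constant number of reference points in $V$), which amounts to $O(n)$ distance queries and $O(n)$ time; given $k$, a VarOpt sample of exact size $k$ or a Poisson sample of expected size $k$ can be drawn in one additional linear pass over the coefficient vector. Unbiasedness of $\hat{\W}(z)$ then follows from the fact that $\sum_v \gamma_v = O(1)$ forces $p_v > 0$ for every $v \in V$: each nonzero contribution $\dist(z,u)/p_u$ in \eqref{centsampest} has expectation $\dist(z,u)$, and summing yields $\E[\hat{\W}(z)] = \W(z)$.

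The crux is extending the approximate PPS property from nodes of $V$ to an arbitrary query $z$ in the metric space, namely showing that some absolute constant $c' > 0$ satisfies $\gamma_v \geq c' \dist(z,v)/\W(z)$ for every $v \in V$. The approach is a triangle-inequality inflation lemma: choose a witness $u^\ast \in V$, say the point of $V$ nearest to $z$. Then $\dist(z,v) \leq \dist(z,u^\ast) + \dist(u^\ast,v)$, $\dist(u^\ast,v) \leq \dist(z,u^\ast) + \dist(z,v)$, and $|\W(u^\ast) - \W(z)| \leq n\,\dist(z,u^\ast)$, together with $\dist(z,u^\ast) \leq \min_v \dist(z,v) \leq \W(z)/n$, are enough to translate the approximate PPS guarantee for $u^\ast$ (which holds because $u^\ast \in V$) into one for $z$, losing only an absolute constant factor. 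I expect this step to be the main obstacle: one has to treat separately the $v$ close to $u^\ast$ and the $v$ far from $u^\ast$, since the triangle inequality is tight in opposite directions in those two regimes, and the degradation in $\W(u^\ast)/\W(z)$ must be absorbed into $c'$ without blowing up when $z$ happens to sit near the centroid of $V$ while $u^\ast$ does not.

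Once approximate PPS for $z$ is established, both bullets follow from the analysis of Section~\ref{analysis:sec}. For the first, the Horvitz--Thompson variance formula combined with approximate PPS of total mass $O(1)$ and sample size $k$ yields $\var(\hat{\W}(z)) = O(\W(z)^2/k)$, so $k = O(\epsilon^{-2})$ gives $\CV \leq \epsilon$. For the second, approximate PPS also bounds every nonzero summand $\dist(z,u)/p_u$ by $O(\W(z)/k)$, which together with the variance bound lets us invoke a Bernstein-type concentration inequality; taking $k = O(\epsilon^{-2} \log n)$ drives the per-query failure probability below $n^{-\Omega(1)}$, and a union bound over the polynomially many queries of interest completes the proof.
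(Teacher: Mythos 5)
Your skeleton (coefficients from Algorithm~\ref{centsamp:alg}, inverse-probability estimator \eqref{centsampest}, variance bound as in Lemma~\ref{varq:lemma}, then concentration) matches the paper, but your treatment of arbitrary query points differs: the paper never reduces $z$ to its nearest neighbor in $V$. Instead it defines $m(z)$ and $\W(z)$ for every point of the metric space and observes that Lemmas~\ref{ppsq:lemma}, \ref{wellpos:lemma} and \ref{ppswp:lemma} are proved for an arbitrary $z$, so the approximate/universal PPS property holds for queries directly. Your nearest-neighbor transfer can be made to work, but note two points. First, the four triangle-type inequalities you list are \emph{not} enough by themselves: for $v$ with $\dist(z,v)\le 2\dist(z,u^\ast)$ the bound $\dist(u^\ast,v)\ge \dist(z,v)-\dist(z,u^\ast)$ can be vacuous (e.g.\ $v=u^\ast$), and you must fall back on the floor $\gamma_v\ge 1/n$ from the algorithm's initialization together with $\dist(z,v)\le 2\dist(z,u^\ast)\le 2\W(z)/n$; this is exactly how the paper handles the ``close'' set $L$ in its own lemmas. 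Second, for the first bullet with a constant-size uniform base set, the approximate-PPS constant for $u^\ast$ is not an absolute constant but depends on the (random) rank of the closest base node, so after transferring to $z$ you still need the expectation-over-rank computation of Lemma~\ref{var:lemma} (or simply apply Lemma~\ref{ppsq:lemma} to $z$ itself, which is the paper's shortcut).

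The genuine gap is in the second bullet. Your argument drives the \emph{sampling-stage} failure probability down to $1/poly(n)$ via $k=O(\epsilon^{-2}\log n)$, but this is conditioned on the coefficients being approximate PPS for $z$ (equivalently for $u^\ast$), and you never ensure that this event itself holds with probability $1-1/poly(n)$ within the stated preprocessing budget of $O(n)$ distance computations. With a base set of $O(1)$ uniformly chosen points, the closest base node to $z$ has rank worse than, say, $n/2$ with \emph{constant} probability, so the approximate-PPS property fails with constant probability and the high-probability claim does not follow. The paper closes exactly this hole with Section~\ref{relwellpos:sec}: it relaxes the notion of well-positioned point, shows a $0.6n$ well-positioned point can be identified with probability $1-1/poly(n)$ using only $O(\log^2 n)$ distance computations (Lemma~\ref{metricwellpositioned:lemma}), places it in $S_0$, and thereby obtains universal PPS coefficients deterministically (given the identification succeeds) while staying within $O(n)$ distance computations overall; the alternative of $O(\log n)$ uniform base nodes would cost $O(n\log n)$ distance computations and exceed the theorem's preprocessing bound. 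Your proposal needs this ingredient (or an equivalent cheap high-probability construction of a good base set) to establish the second bullet.
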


 We can also estimate all-pairs sum, using either primitive of
 single-source distances (for graphs) or distance computations (metric
 spaces).
  \begin{theorem} \label{allpairssum:coro}
All-pairs sum can be estimated unbiasedly with the following
statistical guarantees:
\begin{itemize}
\item
\CV\ of at most $\epsilon$, using
$O(\epsilon^{-2})$ single-source
  distance computations.
With a relative error that exceeds  $\epsilon$ with a polynomially
small probability,  using $O(\epsilon^{-2} \log n)$
 single-source distance computations.
\item
With  \CV\ of at most $\epsilon$, using $O(n+\epsilon^{-2})$ distance
computations. With a  relative error  that exceeds  $\epsilon$ with
polynomially small probability, 
$$\Pr\left[\frac{|\widehat{\APS}(V)-\APS(V)|}{\APS(V)} > \epsilon
\right] \leq 1/poly(n)$$
using $O(n+\epsilon^{-2}\log n)$ distance computations.
\end{itemize}
  \end{theorem}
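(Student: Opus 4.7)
My plan is to prove the two bullets by reducing them to the universal-PPS sampling framework built in Section~\ref{build:sec}.

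For the first bullet I would reuse the very same node sample $S$ as in Theorem~\ref{allnodes:thm}: compute a single-source shortest path tree from every $u\in S$ (so $|S|$ SSSP computations in total) to obtain the exact values $\W(u)$, and output the Horvitz--Thompson estimator
\begin{equation} \label{apssssp}
\widehat{\APS}(V) \;=\; \frac{1}{2}\sum_{u\in S}\W(u)/p_u.
\end{equation}
Swapping the order of summation in \eqref{centsampest} shows that this equals $\frac{1}{2}\sum_{v\in V}\hat{\W}(v)$, so unbiasedness is immediate, and the high-probability guarantee follows by summing the per-node relative-error guarantees of Theorem~\ref{allnodes:thm} on the event that $|\hat{\W}(v)-\W(v)|\leq \epsilon\W(v)$ for every $v$ simultaneously. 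For the \CV\ bound, the key observation is that universal approximate PPS for distances automatically implies approximate PPS for the $\W$-values: since $\dist(u,z)\leq \gamma_u\W(z)/c$ for every $z$, summing over $z\in V$ yields $\W(u)\leq (2\APS(V)/c)\gamma_u$, i.e.\ $\gamma_u \geq c\W(u)/(2\APS(V))$. Plugging this into the Poisson (or VarOpt) variance formula for \eqref{apssssp} gives
\[
\var\bigl[\widehat{\APS}(V)\bigr] \;\leq\; \frac{1}{4}\sum_u \W(u)^2/p_u \;\leq\; \frac{O(1)}{k}\sum_u \W(u)\cdot \frac{\APS(V)}{c} \;=\; O(\APS(V)^2/k),
\]
so $k=O(\epsilon^{-2})$ yields $\CV\leq\epsilon$.

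For the second bullet I would switch from node sampling to pair sampling so that each draw costs only one distance query. After the $O(n)$ distance computations used to build $\boldsymbol{\gamma}$, I would draw $k$ i.i.d.\ unordered pairs with probability $w_{\{u,v\}}=(\gamma_u+\gamma_v)/Z$, where $Z=(n-1)\sum_u\gamma_u=O(n)$, query only the distance of each sampled pair, and output the Horvitz--Thompson estimator
\[
\widehat{\APS}(V) \;=\; \frac{1}{k}\sum_{i=1}^k \frac{\dist(U_i,V_i)}{w_{\{U_i,V_i\}}},
\]
which uses $O(n)+k$ distance computations in total and is unbiased by linearity.

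The main obstacle is bounding the single-sample second moment $\sum_{\{u,v\}}\dist(u,v)^2/w_{\{u,v\}}$. I would anchor the triangle inequality at the 1-median $z^*$, for which $\W(z^*)\leq 2\APS(V)/n$: approximate PPS at $z^*$ gives $\dist(u,v)\leq \dist(u,z^*)+\dist(v,z^*)\leq (\gamma_u+\gamma_v)\W(z^*)/c$, hence $\dist(u,v)^2/w_{\{u,v\}}\leq Z(\gamma_u+\gamma_v)\W(z^*)^2/c^2$, and summing over pairs gives $\sum \dist(u,v)^2/w_{\{u,v\}} \leq Z^2\W(z^*)^2/c^2 = O(n^2)\cdot O(\APS(V)^2/n^2) = O(\APS(V)^2)$. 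Thus a single sample has variance $O(\APS(V)^2)$ and $k=O(\epsilon^{-2})$ gives the \CV\ bound. The same triangle-plus-PPS inequality also shows that every sample is bounded by $Z\W(z^*)/c = O(\APS(V))$, so Bernstein's inequality yields relative error at most $\epsilon$ with probability $1-1/poly(n)$ once $k=O(\epsilon^{-2}\log n)$.
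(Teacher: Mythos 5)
For the first bullet your estimator is literally the paper's: your identity $\frac{1}{2}\sum_{u\in S}\W(u)/p_u=\frac{1}{2}\sum_{v\in V}\hat{\W}(v)$ is correct, and your high-probability argument (work on the event of Theorem~\ref{allnodes:thm} that all $n$ relative errors are at most $\epsilon$ simultaneously) is exactly the paper's. The gap is in your \CV\ derivation. You invoke $\dist(u,z)\le \gamma_u\W(z)/c$ \emph{for every} $z$ and sum over $z$ to get $\gamma_u\ge c\,\W(u)/(2\APS(V))$; but that universal PPS property is only guaranteed (Lemma~\ref{ppswp:lemma}) when $S_0$ contains a well-positioned node, and ensuring that with high probability costs $\Theta(\log n)$ extra single-source computations --- which blows the $O(\epsilon^{-2})$ budget of this sub-claim whenever $\epsilon^{-2}=o(\log n)$. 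With the base set the theorem actually affords here (two uniform nodes), Lemma~\ref{ppsq:lemma} only gives, for each $z$, a bound with the random factor $(1-q_z)/4$, which can be arbitrarily small for unlucky $z$, so your deterministic ``sum over $z$'' step fails. The paper instead bounds each $\var[\hat{\W}(z)]$ in expectation over the random base nodes (Lemma~\ref{var:lemma}) and then uses the footnoted fact that a sum of possibly correlated estimates, each with \CV\ at most $\epsilon$, has \CV\ at most $\epsilon$. Your proof is repaired either by invoking that correlated-sum observation on $\frac{1}{2}\sum_z\hat{\W}(z)$, or by taking the expectation of your conditional variance bound $\frac{1}{4}\sum_u(1/p_u-1)\W(u)^2$ over the choice of the uniform base nodes; as written, the step is not justified under the stated resource bound.

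Your second bullet is correct but takes a genuinely different route. The paper samples pairs from an outer product $\rho_u\gamma_v$, where $\boldsymbol{\rho}$ is a second vector that is approximate PPS for the row sums, built from the rough estimates $n\dist(u,z)+\W(z)$ anchored at a relaxed well-positioned point, and shows $\rho_u\gamma_v\ge c\,\dist(u,v)/\APS(V)$. You instead use the ``sum'' measure $w_{\{u,v\}}=(\gamma_u+\gamma_v)/Z$ built from $\boldsymbol{\gamma}$ alone, and anchor the triangle inequality at the unknown 1-median $z^*$, exploiting $\W(z^*)\le 2\APS(V)/n$ together with universal PPS at $z^*$ to get $\dist(u,v)\le(\gamma_u+\gamma_v)\W(z^*)/c$; this yields both the per-sample bound $O(\APS(V))$ and the second-moment bound $O(\APS(V)^2)$, hence the \CV\ and the Hoeffding/Bernstein concentration, and your accounting ($O(n)$ preprocessing plus one distance per sampled pair) matches the theorem. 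What your variant buys is a simpler construction --- no $\boldsymbol{\rho}$ and no rough-estimate lemma --- while it gives up the paper's structural byproduct that a rank-one matrix approximates the PPS probabilities of the distance matrix; your measure is equally easy to sample in $O(n+k)$ time (draw one endpoint PPS by $\boldsymbol{\gamma}$ and the other uniformly among the remaining points, which gives exactly $w_{\{u,v\}}$ on unordered pairs). Note that, exactly as in the paper, your argument still rests on $\boldsymbol{\gamma}$ being universal PPS, i.e., on first identifying a relaxed well-positioned base point (Lemma~\ref{metricwellpositioned:lemma}) within the $O(n)$ preprocessing, and strictly speaking the \CV\ statement holds up to the polynomially small failure probability of that identification --- a caveat the paper shares.
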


The proof details are  provided in   Section \ref{allpairssum:sec}.
The part of the claim that uses single-source distance computations is
established by using the
estimate $\widehat{\APS}(V) =\frac{1}{2} \sum_{z\in V} \hat{\W}(z)$.
When the estimates have \CV\ of at most $\epsilon$, even if
correlated, so does the
estimate $\widehat{\APS}(V)$.\footnote{In general if random variables $X$ and $Y$ have CV $\epsilon$ then so does their sum:
$\frac{\var(X+Y)}{(E(X+Y))^2} = \frac{\var(X) + \var(Y) + 2\cov(X,Y)}{(E(X+Y))^2}\le  \frac{\var(X) + \var(Y) + 2\sqrt{\var(X)\var(Y)}}{(E(X+Y))^2} \le\frac{\epsilon^2(E(X))^2 + \epsilon^2(E(Y))^2 + 2\epsilon^2E(X)E(Y)}{(E(X+Y))^2} \le \epsilon^2 $.}
  For the  high probability claim, we use
$O(\log n)$ single-source computations to ensure
we obtain universal PPS coefficients with high
probability (details are provided later), which imply that each
estimate $\hat{W}(z)$, and hence the sum is concentrated.

For the second part that uses distance computations, we consider an
approximate PPS distribution that is with respect to 
 $\dist(u,v)$, that is, the probability of sampling the pair $(u,v)$ is at
 least $c \dist(u,v)/\APS(V)$ for some constant $c$.   We show that we
 can compactly  represent this 
distribution as the outer product of two probability vectors of size $n$. 
Using this representation we can draw $O(\epsilon^{-2})$ pairs
independently in linear time, which we use for estimating the average.


Compared to the all-nodes sums algorithms of \cite{classiccloseness:COSN2014}, our result here improves the dependency in
$\epsilon$ from $\epsilon^{-3}$ to $\epsilon^{-2}$ (which is likely to
be optimal for a sampling based approach), provides an unbiased
estimates, and also facilitates approximate average distance oracles with very
small storage in metric spaces (the approach of
\cite{classiccloseness:COSN2014} would require the oracle to store a
histogram of distances from each of $\epsilon^{-3}$ sampled nodes).
For the all-pairs sum problem in graphs, we obtain an algorithm that
uses $O(\epsilon^{-2})$ single source
distance computations, which improves over an algorithm that does $O(\epsilon^{-3})$ single source
distance computations implied by \cite{classiccloseness:COSN2014}.
For the all pairs sum problem in a metric space, we obtain a CV of $\epsilon$
using $O(n+\epsilon^{-2})$ distance computation rather than $O(n\epsilon^{-2})$ distance computations required by the algorithms in
\cite{AveDist2007,Indyk:stoc1999}.

While our analysis does not optimize constants, our algorithms
are very simple and we expect them to be effective in applications.

\ignore{
\medskip
\noindent
{\bf Uniform sampling}  We show (Section \ref{uniformsamp:sec})  how the basic approach of Indyk and
Thorup can be extended to solve the all centralities problem using
$O(\epsilon^{-2} \log n)$ single source computations.  It can also be
for  CCC oracles in a metric space using preprocessing that amounts to
$\epsilon^{-2} \log n)$ single-source distances computation, and storage
of $\epsilon^{-2} \log n)$.   Note that the factor of $\log n$ is
inherent even if we are interested in per-query guarantees.
}

\section{Constructing the sample} \label{build:sec}

  We present Algorithm \ref{centsamp:alg}  that computes a set of sampling
 probabilities associated with the nodes of an input graph $G$.
 We use graph terminology but the
algorithm applies both in graphs and in metric spaces.   The input to the algorithm is a set $S_0$ of base nodes and a parameter $k$ (we discuss how to choose $S_0$ and $k$ below). The algorithm consists of
the following stages.  We first compute a  sampling
coefficient $\gamma_v$ for each node $v$ such that $\sum_v \gamma_v = O(1)$.   Then we use the  parameter $k$ and compute the
 sampling probabilities $p_v=\min\{1, k\gamma_v\}$.  Finally we use
the probabilities $p_v$ to draw a sample of expected size $O(k)$, by choosing $v$ with probability $p_v$.   We
 usually apply the algorithm once with a pre-specified $k$ to
obtain a sample, but there are applications (see discussion in
Section \ref{adaptice1med:sec}) in which we want to choose the sample size
adaptively using the same coefficients.

\begin{algorithm}[h]
\caption{Compute universal PPS coefficients and sample \label{centsamp:alg}}
\KwIn {Undirected graph with vertex set $V$ or a set of points $V$ in a metric space, base
  nodes $S_0$, parameter $k$}
\KwOut{A universal PPS sample $S$}
\tcp{Compute sampling coefficients $\gamma_v$}
\ForEach{node $v$}{ $\gamma_v\gets 1/n$}
\ForEach{$u \in S_0$}
{
Compute shortest path distances $\dist(u,v)$  from $u$ to all other nodes $v\in V$\\
$W \gets \sum_v \dist(u,v)$\\
\ForEach{node $v\in V$}
{$\gamma_v \gets \max\{\gamma_v, \frac{\dist(u,v)}{W} \}$}
}
\ForEach(\tcp*[h]{Compute sampling probabilities $p_v$}){node $v\in V$}{$p_v \gets \min\{1, k\gamma_v \}$}
 $S \gets \emptyset$ \tcp{Initialize sample}
\ForEach(\tcp*[h]{Poisson sample according to $p_v$}){$v\in V$}
{
  \If{$rand() <p_v$}{$S\gets S \cup\{(v,p_v)\}$}
}
\Return{$S$}
\end{algorithm}

\paragraph*{Running time and sample size}
 The running time of this algorithm on a metric space is dominated
 by $|S_0|n$  distance computations.
On a graph, the running time is $|S_0|m\log n$, and is dominated by the $|S_0|$
single-source shortest-paths computations.
The expected size of the final sample $S$ is  $\sum_v p_v \leq k \sum_v
\gamma_v = O(k)$.

\paragraph*{Choosing the base set $S_0$}
We will show that in order to obtain the property that each
estimate $\hat{\W}(v)$  has \CV\ $O(\epsilon)$, it suffices 
that the base set $S_0$  includes a  
 uniform sample of $\geq 2$  nodes and we
need to choose $k=\epsilon^{-2}$.  Note that the \CV\ is computed
over the randomization in the choice of nodes to $S_0$ and of the
sample we choose using the computed coeffcients.
 We will also introduce a notion of a {\em well positioned} node,
  which we precisely define in the sequel.  We will see that when
  $S_0$ includes such a node, we also have \CV\ of $O(\epsilon)$ with
  $k=\epsilon^{-2}$.
This time using only the randomization in the selection of the sample.
 Moreover, if we choose $k=\epsilon^{-2} \log n$ and ensure that $S_0$
 contains a well-positioned node with probability at least
 $1-1/poly(n)$ then we obtain that the probability that the relative
 error exceeds $\epsilon$ is polynomially small.
We will see that most nodes are well positioned, and therefore, it is
relatively simple  to identify such a node with high probability.

\section{Estimation} \label{estimators:sec}

\subsection{Centrality values for all nodes in a graph}
For graphs, we compute estimates  $\hat{\W}(v)$ for all nodes $v\in V$ as in Algorithm \ref{estgraphs:alg}.
We initialize all estimates to $0$, and perform a SSSP
computation from each node in $u \in S$.  When scanning node $v$, during such SSSP computation,  we add $\dist(u,v)/p_u$ to the estimate $\hat{\W}(v)$.
The algorithms runs in $O(|S| m\log n)$ time,
dominated by the $|S|$ SSSP computations from each node in the
sample $S$.

\begin{algorithm}[h]
\caption{Compute estimates $\hat{\W}(v)$  for all nodes $v$ in the graph\label{estgraphs:alg}}
\KwIn{Weighted graph $G$, a sample $S=\{(u,p_u)\}$}
\ForEach{$v\in V$}{$\hat{\W}(v) \gets 0$}
\ForEach{$u\in S$}
{
Perform a single-source shortest-paths computation from $u$.\\
\ForEach{scanned node $v\in V$}
{
$\hat{\W}(v) \gets \hat{\W}(v) + \dist(u,v)/p_u$
}
}
\Return{$(v,\hat{\W}(v))$ for $v\in V$}
\end{algorithm}


\subsection{Point queries (metric space)}

 For a query point $z$ (which is not necessarily a member of  $V$), we
compute the distance $\dist(z,x)$ for all $x\in S$, and apply
\eqref{centsampest}.
This takes $|S|$ distance computations for each query.

\section{Correctness} \label{analysis:sec}

  We first show (Section \ref{unibase:sec}) show that when $k=\epsilon^{-2}$, and
 $S_0$ includes either a uniform sample of size at least $2$ 
then  each estimate $\hat{\W}(v)$ has \CV\ of $O(\epsilon)$.
We then define  well-positioned nodes in Section \ref{wpnodes:sec} and show
that if $S_0$ contains a well positioned node we and sample size is
$k=\epsilon^{-2}$
then the \CV\ is $O(\epsilon)$ (Section \ref{basewp:sec}) and when 
  $k=O(\epsilon^{-2}\log n)$, the
  probability that the relative
  error  exceeds $\epsilon$ is polynomially small (Section \ref{hpests:sec}).

 In Section \ref{gammaupper:sec} we establish an interesting property of our sampling
 coefficients:  They can
not grow too much even if the base set $S_0$ is very large.
 Clearly, $\sum_v \gamma_v \leq 1+|S_0|$, but we will
show that it is $O(1)$ regardless of the size of $S_0$.

We start with some useful lemmas.
\begin{lemma}  \label{ppsq:lemma}
Suppose that  $S_0$ contains a node $u$.  Consider a node $z$
such that $u$ is the $(qn)^{th}$ closest node to $z$. Then for all nodes $v$,
\begin{equation} \label{ppsq:eq}
\gamma_v \geq \frac{1-q}{4} \cdot \frac{\dist(z,v)}{\W(z)} \ .
\end{equation}
\end{lemma}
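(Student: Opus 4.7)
The plan is to exploit the two lower bounds that Algorithm~\ref{centsamp:alg} enforces on $\gamma_v$: the initialization $\gamma_v \geq 1/n$, and the update $\gamma_v \geq \dist(u,v)/\W(u)$ from any $u \in S_0$. Neither bound alone will give the conclusion for every $v$, so I plan to combine them via a case split on how $\dist(z,v)$ compares to $\dist(z,u)$. Nodes $v$ close to $z$ (relative to $u$) will be handled by the initialization, while far nodes will be handled by the update from $u$.

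First I would extract the two consequences of $u$ being the $(qn)^{th}$ closest node to $z$. Because at least $(1-q)n$ nodes lie at distance $\geq \dist(z,u)$ from $z$, we obtain
$$\W(z) \geq (1-q)n\,\dist(z,u).$$
Summing the triangle inequality $\dist(u,x) \leq \dist(z,u) + \dist(z,x)$ over $x \in V$ and feeding the previous bound back in yields
$$\W(u) \leq \W(z) + n\,\dist(z,u) \leq \frac{2-q}{1-q}\,\W(z).$$
These two inequalities are the only ``global'' facts I need: the first upper bounds how far $z$ is from $u$, and the second upper bounds $\W(u)$ by $\W(z)$ up to a factor depending on $q$.

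Now I would split the argument according to whether $\dist(z,v) \leq 2\dist(z,u)$ or not. In the first case, the bound on $\W(z)$ forces $\dist(z,v)/\W(z) \leq 2/((1-q)n)$, so the target quantity $(1-q)\dist(z,v)/(4\W(z))$ is at most $1/(2n)$, which is dominated by the initialization $\gamma_v \geq 1/n$. In the complementary case $\dist(z,v) > 2\dist(z,u)$, the triangle inequality forces $\dist(u,v) \geq \dist(z,v) - \dist(z,u) \geq \dist(z,v)/2$, and combining $\gamma_v \geq \dist(u,v)/\W(u)$ with the upper bound on $\W(u)$ above gives the claim after gathering the constants $1/2$ and $(1-q)/(2-q) \geq (1-q)/2$.

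There is no real obstacle; the one thing to watch is the threshold in the case split. The factor $2$ is chosen so that the losses in the two cases meet exactly at the constant $4$ in the target inequality: a larger threshold weakens the ``far'' case through a smaller lower bound on $\dist(u,v)$, while a smaller threshold weakens the ``close'' case by making the $1/n$ initialization insufficient. I would briefly verify this balance at the end to confirm the constant $4$ in the stated bound is tight under this particular proof scheme.
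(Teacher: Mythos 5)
Your proposal is correct and follows essentially the same route as the paper's proof: the same two lower bounds on $\gamma_v$ (the $1/n$ initialization and the $\dist(u,v)/\W(u)$ update), the same global facts $\W(z) \geq (1-q)n\,\dist(z,u)$ and $\W(u) \leq \frac{2-q}{1-q}\W(z)$, and the same case split at the threshold $\dist(z,v) \leq 2\dist(z,u)$. The constants work out exactly as you describe, so nothing further is needed.
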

\begin{proof}
From the specification of Algorithm \ref{centsamp:alg}, the sampling
coefficients $\gamma_v$ satisfy
\begin{equation} \label{probgamma:eq}
\gamma_{v} \ge \max\left\{\frac{1}{n}, \frac{\dist(u,v)}{\W(u)} \right\} \ .
\end{equation}
 Let $Q=\dist(z,u)$.
Consider a classification  of the nodes $v\in V$ to ``close'' nodes and ``far'' nodes according to their distance from $z$:
\begin{eqnarray*}
L &=& \{v\in V \mid \dist(z,v)\leq 2Q\} \\
H &=& \{v\in V \mid \dist(z,v)> 2Q \} \ .
\end{eqnarray*}

Since $\gamma_v \geq 1/n$, for $v\in L$ we have
\begin{equation} \label{gammav-L-bound} \gamma_v \ge \frac{1}{n} \ge  \left(\frac{1-q}{2}\right) \left( \frac{2}{1-q}\right) \frac{1}{n} =
 \left(\frac{1-q}{2}\right) \left(\frac{2Q}{(1-q)Q}\right) \frac{1}{n} \ge
 \left(\frac{1-q}{2} \right)  \frac{\dist(z,v)}{\W(z)} ,
\end{equation}
where the last inequality holds since for $v\in L$ we have $\dist(z,v) \le 2Q$, and
since $\W(z) \geq \left(1- q \right) nQ$ if $u$ is the $(qn)$th closest node to $z$.

For all $v$, we have  that $\dist(u,v) \geq \dist(z,v)-Q$ by the triangle inequality.  We also
have $\W(u) \leq \W(z) + nQ$.
Substituting into (\ref{probgamma:eq}) we get that for every $v$
\begin{equation} \label{gamma_vbound}
\gamma_v \geq  \frac{\dist(u,v)}{\W(u)} \geq  \frac{\dist(z,v)-Q }{\W(z) +nQ}\ .
\end{equation}
In particular, for $v\in H$, we have
\begin{equation} \label{d-Qbound}
\dist(z,v)-Q \geq \frac{1}{2} \dist(z,v)\ .
\end{equation}
As already mentioned, we also have $\W(z) \geq \left(1- q \right) nQ$ and thus
\begin{equation} \label{NQbound}
nQ \leq \frac{\W(z)}{1-q} \ ,
\end{equation}
and
\begin{equation} \label{WnQbound}
\W(z)+nQ \leq \W(z)\left(1+\frac{1}{1-q} \right) = W(z) \left( \frac{2-q}{1-q} \right) \ .
\end{equation}
Substituting \eqref{WnQbound} and \eqref{d-Qbound} in \eqref{gamma_vbound}, we obtain that for $v\in H$,
\begin{equation} \label{gamma_vbound1}
\gamma_v \geq \frac{\dist(z,v)-Q }{\W(z) +nQ}  \geq \frac{1}{2}
\left(\frac{1-q}{2-q} \right)  \frac{\dist(z,v)}{\W(z)} \ .
\end{equation}

The lemma now follows from (\ref{gammav-L-bound}) and (\ref{gamma_vbound1}).
%
\end{proof}

\begin{lemma}  \label{varq:lemma}
Consider a set of sampling coefficients $\gamma_v$ such that for a node
$z$, for all $v$ and for some $c>0$, $\gamma_v \geq c \frac{\dist(z,v)}{\W(z)}$.
Let $S$ be a sample obtained with
 probabilities  $p_v = \min\{1,k\gamma_v\}$
(as in Algorithm \ref{centsamp:alg}), and let
$\hat{\W}(z)$ be the inverse probability estimator as in \eqref{centsampest}.
 Then
\begin{equation} \label{qdependence}
\var[\hat{\W}(z)] \leq
 \frac{\W(z)^2}{k\cdot c}  \ .
\end{equation}
\end{lemma}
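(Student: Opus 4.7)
The plan is a standard Horvitz--Thompson variance calculation, leveraging the approximate PPS hypothesis to bound each per-node variance by a term that telescopes when summed.

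First I would exploit the fact that Algorithm~\ref{centsamp:alg} draws a Poisson sample: each node $v$ is included in $S$ independently with probability $p_v$. Consequently the per-node inverse-probability estimators $\widehat{\dist}(z,v) = \mathbf{1}[v\in S]\cdot \dist(z,v)/p_v$ are mutually independent, so
\[
\var[\hat{\W}(z)] = \sum_{v\in V} \var[\widehat{\dist}(z,v)] .
\]

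Next I would compute each individual variance. Since $\widehat{\dist}(z,v)$ is either $\dist(z,v)/p_v$ (with probability $p_v$) or $0$, a direct computation gives
\[
\var[\widehat{\dist}(z,v)] = \frac{\dist(z,v)^2}{p_v} - \dist(z,v)^2 \;\le\; \frac{\dist(z,v)^2}{p_v} .
\]
When $p_v=1$ the variance is actually $0$ (so these terms can be dropped); when $p_v = k\gamma_v < 1$, I invoke the approximate PPS hypothesis $\gamma_v \ge c\,\dist(z,v)/\W(z)$ to obtain
\[
\var[\widehat{\dist}(z,v)] \;\le\; \frac{\dist(z,v)^2}{k\gamma_v} \;\le\; \frac{\dist(z,v)\cdot \W(z)}{k\cdot c} .
\]

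Finally, I would sum over $v\in V$; the factor of $\dist(z,v)$ in the numerator telescopes against the definition of $\W(z)$:
\[
\var[\hat{\W}(z)] \;\le\; \frac{\W(z)}{k\cdot c}\sum_{v\in V} \dist(z,v) \;=\; \frac{\W(z)^2}{k\cdot c} ,
\]
which is the claimed bound. There is essentially no hard step: the only subtlety is handling the $p_v=1$ case (where $\min$ clips the probability), but this only reduces the variance and is therefore safely absorbed into the inequality $\var[\widehat{\dist}(z,v)] \le \dist(z,v)^2/p_v$.
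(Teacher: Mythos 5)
Your proof is correct and follows essentially the same route as the paper: both exploit independence of the Poisson sample to write $\var[\hat{\W}(z)]=\sum_v\left(\frac{1}{p_v}-1\right)\dist(z,v)^2$, drop the $p_v=1$ terms, and bound the rest via $p_v\geq ck\,\dist(z,v)/\W(z)$ so that the sum collapses to $\frac{\W(z)}{kc}\sum_v\dist(z,v)\leq\frac{\W(z)^2}{kc}$. No gaps; your handling of the clipped probabilities matches the paper's.
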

\begin{proof}
The variance of  our estimator is
\begin{eqnarray} 
\var[\hat{\W}(z)] & = & \sum_v \left[p_v \left(\frac{\dist(z,v)}{p_v}
    - \dist(z,v)\right)^2 + (1-p_v)\dist(z,v)^2\right] \nonumber \\
&=&   \sum_{v} \left(\frac{1}{p_v}-1 \right) \dist(z,v)^2\ . \label{eq:var}
\end{eqnarray}

 Note that nodes $v$ for which $p_v=1$ contribute $0$ to
the variance.  For the other nodes we use the lower bound $p_v \geq c
k \frac{\dist(z,v)}{\W(z)}$.
\begin{eqnarray*}
\sum_{v\in V} \left(\frac{1}{p_v}-1 \right) \dist(z,v)^2   &=&  \sum_{v\in V \mid p_v<1}
\left(\frac{1}{p_v}-1 \right) \dist(z,v)^2  \\
 &\leq&   \frac{\W(z) }{k\cdot c} \sum_{v\in V} \dist(z,v)\\
&\leq&  \frac{\W(z)^2}{k\cdot c}\ .
\end{eqnarray*}
\end{proof}

\subsection{Base set containing a uniform sample} \label{unibase:sec}
We now consider a situation where $S_0$ includes a uniform sample
  of nodes, and consider the corresponding expected approximation quality:
\begin{lemma} \label{var:lemma}
Suppose that  $S_0$ contains a uniform random sample of $b$ nodes.  Then for
any node $z$,
\begin{equation} \label{bdependence}
\var[\hat{\W}(z)] \leq
\frac{\W(z)^2}{k}  \frac{4b}{b-1} \ .
\end{equation}
\end{lemma}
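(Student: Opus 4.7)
The plan is to apply Lemmas \ref{ppsq:lemma} and \ref{varq:lemma} conditionally on $S_0$, and then take expectation over the random choice of $S_0$. Since every $p_v>0$, the estimator $\hat{\W}(z)$ is unbiased conditional on any realization of $S_0$, so $\E_S[\hat{\W}(z)\mid S_0] = \W(z)$ is deterministic, and the total variance decomposition collapses to
\[
\var[\hat{\W}(z)] \;=\; \E_{S_0}\!\bigl[\var_S(\hat{\W}(z)\mid S_0)\bigr].
\]

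Next I would fix $S_0$ and let $u^*$ be the node in $S_0$ closest to $z$; say $u^*$ is the $r^*$-th closest node of $V$ to $z$, and set $q^* = r^*/n$. Because $\gamma_v$ is a pointwise maximum over the contributions of all $u\in S_0$ (and the floor $1/n$), using only $u=u^*$ in Lemma \ref{ppsq:lemma} can only weaken the bound, and still gives
\[
\gamma_v \;\ge\; \frac{1-q^*}{4}\cdot \frac{\dist(z,v)}{\W(z)} \qquad\text{for every }v.
\]
Substituting $c=(1-q^*)/4$ into Lemma \ref{varq:lemma} yields the conditional bound
\[
\var_S\!\bigl(\hat{\W}(z)\mid S_0\bigr) \;\le\; \frac{4\,\W(z)^2}{k\,(1-q^*)},
\]
so by the variance identity above, the lemma reduces to showing $\E_{S_0}[1/(1-q^*)] \le b/(b-1)$.

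The expected main obstacle is this last expectation, which is a short combinatorial calculation. Since $r^*$ is the minimum of $b$ distinct ranks drawn uniformly from $\{1,\dots,n\}$, its distribution is $\Pr[r^*=k] = \binom{n-k}{b-1}/\binom{n}{b}$ for $k=1,\dots,n-b+1$. Writing $1/(1-q^*) = n/(n-r^*)$, substituting, and reindexing by $j=n-k$, the resulting sum collapses via the hockey-stick identity $\sum_{j=b-1}^{n-1}\binom{j-1}{b-2} = \binom{n-1}{b-1}$ to exactly $b/(b-1)$; as a sanity check, the continuous analogue, in which $1-q^*$ is the maximum of $b$ i.i.d.\ $U[0,1]$ variables with density $b\,t^{b-1}$, evaluates to $\int_0^1 (1/t)\,b\,t^{b-1}\,dt = b/(b-1)$, matching the discrete value. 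Multiplying by $4\W(z)^2/k$ delivers the stated variance bound.
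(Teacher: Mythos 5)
Your proposal is correct and follows essentially the same route as the paper: condition on the rank of the closest base node to $z$, combine Lemma~\ref{ppsq:lemma} with Lemma~\ref{varq:lemma} to get $\var[\hat{\W}(z)\mid q^*]\le \frac{4\W(z)^2}{k(1-q^*)}$, and then average over the rank of the nearest of the $b$ uniformly sampled nodes. The only (harmless) difference is in the last step: you evaluate $\E[1/(1-q^*)]=b/(b-1)$ exactly via the hockey-stick identity, whereas the paper upper-bounds the rank distribution and compares the resulting sum to the integral $\int_0^1(1-y)^{b-2}\,dy$, arriving at the same $\frac{4b}{b-1}$ factor.
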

\begin{proof}
We apply Lemma \ref{varq:lemma} with the bound on the coefficients as
in Lemma \ref{ppsq:lemma} with $u$ being the closest node to $z$ in
$S_0$.   Assume that $u$ is the $x$th closest node to $z$. By Lemma \ref{ppsq:lemma} and Lemma \ref{varq:lemma} we have
\begin{equation} \label{varR:eq}
\var[\hat{\W}(z) \mid x] \leq
 \frac{\W(z)^2}{k} \frac{4}{1-x/n}\ .
\end{equation}
Observe that $x$ is a random variable which is the rank  (= position in the sorted order of the nodes by distance
from $z$) of the closest node to $z$ in a
 uniform sample of size $b$. In particular $x$ take values  $\in \{1,2,\ldots ,n-b+1 \}$ ($x=1$ iff $u=z$).
We have that the probability of  rank  $x$ is
$$
b \left(\frac{1}{n} \right) \left(\frac{n-x}{n-1}\right)\left(\frac{n-x-1}{n-2}\right)\cdots \left(\frac{n-x-b+2}{n-b+1}\right) \le
b\left(1-\frac{x}{n} \right)^{b-1} \ .
$$
(We choose the random subset of $S_0$ of $b$ nodes without replacement, we split into $b$ events according to the step in which the node of rank $x$ is chosen. Other items should be chosen from the $n-x$ nodes of rank larger than $x$. )
The variance $\var[\hat{\W}(z)]$ is the expectation, over $x\in \{1,2,\ldots ,n-b+1 \}$, of
$\var[\hat{\W}(z) \mid x]$.
So from \eqref{varR:eq}, we get
\begin{eqnarray*}
\var[\hat{\W}(z)] & \le &  \sum_{x=1}^{n-b+1} b\left(1-\frac{x}{n} \right)^{b-1} \left( \frac{\W(z)^2}{k} \frac{4}{(1-x/n)} \right) \\
&\le & \frac{\W(z)^2}{k} 4b \sum_{x=1}^{n-b+1} \left(1-\frac{x}{n} \right)^{b-2} \\
&\le & \frac{\W(z)^2}{k} 4b \int_0^1 (1-y)^{b-2} dy \\
& = & \frac{\W(z)^2}{k} \frac{4b}{b-1} \ .
\end{eqnarray*}
\end{proof}

 It follows from Lemma \ref{var:lemma} that if we choose $b\geq 2$ nodes uniformly
into $S_0$ and $k=\epsilon^{-2}$,  then for any
 node $z$, our  estimator has
$\var[\hat{\W}(z)] = O(\epsilon^2\W(z)^2)$.
 This concludes the proof of the per-node (per-point) $O(\epsilon)$ bound on the CV
 of the estimator in the first part of Theorems \ref{allnodes:thm} and~\ref{querypoint:thm} for a sample of size $O(\epsilon^{-2})$.

\subsection{Well-positioned nodes} \label{wpnodes:sec}

We provide  a precise definition of
a {\em well positioned}  node.  Let the {\em median distance} of a node
$u$, denote by $m(u)$, be the distance between $u$ and the $\lceil 1+n/2
\rceil$ closest
node to $u$ in $V$. Let $\minmed = \min_{v\in V} m(v)$ be the minimum median
distance of any node $v\in V$. 
In a metric space, we can define $m(u)$ for any point $u$ in 
the space (also for $u\not\in V$), and accordingly, define $\minmed$ as the 
minimum $m(u)$ over all points $u$ in the metric space.

We say that a node
$u$ is {\em well positioned} if  $m(u) \leq 2\minmed$,
that is, $m(u)$, the median distance of $u$ is within a factor of 2 of
the minimum median distance.
 We now show that most nodes are well positioned.
\begin{lemma} \label{mostwellpos:lemma}
Let $v$ be such that is $m(v) = \minmed$. Then  all
 $\lceil 1+n/2 \rceil$ nodes in $V$ that are closest to  $v$  are well
  positioned.
\end{lemma}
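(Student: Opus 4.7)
The plan is to use the triangle inequality in a direct way. Let $v$ satisfy $m(v) = \minmed$, and let $N$ denote the set of the $\lceil 1 + n/2 \rceil$ closest nodes to $v$ in $V$ (so $v \in N$, and every $w \in N$ satisfies $\dist(v,w) \le m(v) = \minmed$ by the definition of $m(v)$).

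Fix any $u \in N$; we want to show $m(u) \le 2\minmed$. The key observation is that $N$ itself furnishes a ball of $\lceil 1 + n/2 \rceil$ nodes close to $u$. Indeed, for every $w \in N$, the triangle inequality gives
\[
\dist(u,w) \le \dist(u,v) + \dist(v,w) \le m(v) + m(v) = 2\minmed,
\]
using that both $u, w \in N$. Thus there are at least $\lceil 1 + n/2 \rceil$ nodes in $V$ (namely, the nodes of $N$, including $u$ itself at distance $0$) lying within distance $2\minmed$ from $u$. By definition, this means $m(u) \le 2\minmed$, so $u$ is well positioned.

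I do not foresee a real obstacle: the argument is a one-line triangle-inequality application, and the only thing to be slightly careful about is the counting convention (whether $u$ itself is counted among the $\lceil 1 + n/2 \rceil$ closest nodes). Since $\dist(u,u) = 0 \le 2\minmed$ and $u \in N$, this causes no issue, and the bound on $m(u)$ follows immediately.
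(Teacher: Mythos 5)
Your proof is correct and is essentially the paper's own argument: both show via the triangle inequality that the $\lceil 1+n/2\rceil$ nodes closest to $v$ all lie within distance $2\minmed$ of any node $u$ in that set, so $m(u)\le 2\minmed$. Your extra care about the counting convention (including $u$ itself) is fine and does not change the argument.
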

\begin{proof}
Let $u$ be one of the
$\lceil 1+n/2 \rceil$ nodes closest to $v$.
Then $\dist(u,v) \le \minmed$ and a ball of radius $2\minmed$ around $u$
 contains all the
$\lceil 1+n/2 \rceil$ nodes closest
to  $v$. So $m(u) \leq 2\minmed$.
\end{proof}

We are interested in well positioned nodes because of the following
property:
\begin{lemma} \label{wellpos:lemma}
If $u$ is a well positioned node, then for every node
$z$ we have that $\dist(z,u) \le 3m(z)$.
\end{lemma}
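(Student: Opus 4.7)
The plan is a simple pigeonhole-plus-triangle-inequality argument. Since $u$ is well positioned, by definition $m(u) \le 2\,\minmed$, and since $\minmed$ is a minimum over all nodes, $\minmed \le m(z)$ for the given node $z$. Thus $m(u) \le 2\,m(z)$. The target inequality $\dist(z,u) \le 3 m(z) = m(z) + 2m(z)$ suggests we want to find some intermediate node $w$ with $\dist(z,w) \le m(z)$ and $\dist(w,u) \le m(u) \le 2m(z)$, so that the triangle inequality closes the argument.

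To produce such a $w$, I would use a pigeonhole argument on the two ``median balls.'' Let $B_z$ be the set of the $\lceil 1+n/2\rceil$ nodes of $V$ closest to $z$; by definition of $m(z)$, every node in $B_z$ is at distance at most $m(z)$ from $z$. Define $B_u$ analogously for $u$; every node in $B_u$ is at distance at most $m(u)$ from $u$. Since
\[
|B_z| + |B_u| \ge 2\lceil 1+n/2 \rceil > n = |V|,
\]
the two sets must share at least one common node $w \in B_z \cap B_u$.

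Now apply the triangle inequality through $w$:
\[
\dist(z,u) \;\le\; \dist(z,w) + \dist(w,u) \;\le\; m(z) + m(u) \;\le\; m(z) + 2\,\minmed \;\le\; m(z) + 2\,m(z) \;=\; 3\,m(z),
\]
using (in order) the triangle inequality, the definitions of $m(z)$ and $m(u)$ together with $w \in B_z \cap B_u$, the well-positioned hypothesis, and $\minmed \le m(z)$. This yields the claim.

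The only subtle point is confirming the pigeonhole step, i.e., that the two median balls really have size exceeding $n/2$ and therefore intersect. This is immediate from the ceiling in the definition of median distance ($\lceil 1+n/2\rceil$), so I expect no real obstacle; the rest is just a three-term triangle inequality.
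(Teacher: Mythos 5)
Your proof is correct and is essentially the paper's argument: the paper also bounds $\dist(u,z)\le m(u)+m(z)$ via a common node in the two median balls and then uses $m(u)\le 2\minmed\le 2m(z)$. The only difference is that you spell out the pigeonhole step (each ball has $\lceil 1+n/2\rceil>n/2$ nodes, so they intersect), which the paper states without elaboration.
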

\begin{proof}
For every two nodes $u$ and $z$ we have that
$\dist(u,z) \le m(u) + m(z)$ since there must be at least
one node $x$ that is both within distance $m(u)$ from $u$ and within
distance $m(z)$ from $z$, and
by the triangle inequality $\dist(u,z) \le \dist(u,x) + \dist(x,z)$.
The lemma follows since
  if $u$ is well positioned then $m(u) \le
2m(z)$.
\end{proof}
As we shall see, this means that sampling probabilities
proportional to the distances from a well positioned node $u$ approximate sampling probabilities
proportional to the distances from any other node $z$, for nodes whose distance from $z$ is substantially larger than $m(z)$.

\subsection{Base set with a well-positioned node} \label{basewp:sec}

We now consider the case where $S_0$ contains a well-positioned
node.  We show that in this case the coefficients $\gamma_v$ satisfy
what we call a {\em universal PPS} property:
\begin{lemma}  \label{ppswp:lemma}
Suppose that  $S_0$ contains a well-positioned node $u$.
Then  for all nodes $v$,
\begin{equation} \label{ppswp:eq}
\gamma_v \geq \frac{1}{18}  \max_z \frac{\dist(z,v)}{\W(z)} \ .
\end{equation}
\end{lemma}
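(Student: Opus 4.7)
The plan is to combine the two lower bounds on $\gamma_v$ guaranteed by Algorithm \ref{centsamp:alg}, namely $\gamma_v \ge 1/n$ and $\gamma_v \ge \dist(u,v)/\W(u)$, and to translate them into bounds that involve $z$ instead of $u$ by paying a constant factor. Fix an arbitrary node $z$. The fact that $u$ is well positioned enters through Lemma \ref{wellpos:lemma}, which gives the controlling inequality $\dist(z,u)\le 3m(z)$; this lets us pass from $u$-based quantities to $z$-based quantities in a uniform way.

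First I would replace the numerator by the triangle inequality: $\dist(u,v)\ge \dist(z,v)-\dist(z,u)\ge \dist(z,v)-3m(z)$. Next I would bound the denominator by summing the triangle inequality over all nodes: $\W(u)\le \W(z)+n\cdot\dist(z,u)\le \W(z)+3nm(z)$. The crucial auxiliary fact that converts $nm(z)$ into $\W(z)$ is the definition of the median: at least $\lfloor n/2\rfloor$ nodes are at distance at least $m(z)$ from $z$, so $\W(z)\ge \lfloor n/2\rfloor\, m(z)$, which gives $n m(z)=O(\W(z))$. Substituted into the previous bound this yields $\W(u)=O(\W(z))$ with an explicit small constant.

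Now I would split on the size of $\dist(z,v)$ relative to $m(z)$ at some threshold $\alpha m(z)$ to be chosen. If $\dist(z,v)\ge \alpha m(z)$ with $\alpha>3$, then $\dist(u,v)\ge (1-3/\alpha)\dist(z,v)$, so the bound $\gamma_v\ge \dist(u,v)/\W(u)$ delivers $\gamma_v\ge c_1(\alpha)\,\dist(z,v)/\W(z)$ for an explicit $c_1(\alpha)$. If instead $\dist(z,v)<\alpha m(z)$, then $\gamma_v\ge 1/n\ge \tfrac{m(z)}{2\W(z)}\ge c_2(\alpha)\,\dist(z,v)/\W(z)$ via $\dist(z,v)<\alpha m(z)$ and $nm(z)\le 2\W(z)$, giving an explicit $c_2(\alpha)$. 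Choosing $\alpha$ to balance $c_1$ and $c_2$ will produce the universal PPS constant; pushing the arithmetic through should land on $1/18$ (or better, so one can afford the stated constant).

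The main obstacle is bookkeeping the constants so that the two cases combine to the advertised $1/18$: each of the three rough bounds ($\dist(z,u)\le 3m(z)$, $nm(z)\le 2\W(z)$, and the triangle-inequality slack $\dist(z,v)-3m(z)$) costs a factor, and the right threshold $\alpha$ must be picked before the constants compose. Since the statement quantifies over all $z$, I must be careful that the bounds derived from $u$ do not implicitly depend on $z$; the only place $z$ enters is through $m(z)$ and $\W(z)$, and the well-positioned property is precisely what decouples $u$ from the choice of $z$.
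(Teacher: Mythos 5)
Your proposal is correct and follows essentially the same route as the paper's proof: the paper also splits on whether $\dist(z,v)$ exceeds a constant multiple of $m(z)$ (it fixes the threshold at $6m(z)$ rather than leaving $\alpha$ free), uses $\gamma_v \ge 1/n$ for the close nodes and $\gamma_v \ge \dist(u,v)/\W(u)$ with $\dist(z,u)\le 3m(z)$ from Lemma \ref{wellpos:lemma} for the far ones, and converts $n\,m(z)$ into $O(\W(z))$ via the median bound $\W(z)\ge \Omega(n)\,m(z)$. The only care needed is the floor in that median bound (for odd $n$ one gets $n\,m(z)\le \tfrac{2n}{n-1}\W(z)$ rather than exactly $2\W(z)$, and the paper itself works with $\W(z)\ge m(z)n/3$ for $n\ge 9$), but this slack still leaves room to choose $\alpha$ so that both cases clear the constant $1/18$.
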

\begin{proof}
We show that for any node $z$,
$\gamma_v \geq \frac{1}{18}  \frac{\dist(z,v)}{\W(z)}$ using a
variation of the proof of Lemma \ref{ppsq:lemma}.

We partition the nodes into two sets.  A set $L$ which contains the
nodes $v$ such that $\dist(z,v) \leq 6 m(z)$ and a set $H$ which contains the
remaining nodes.
By the definition of $m(z)$ we have that  $\W(z) \geq m(z)(\lfloor \frac{n}{2} \rfloor -1) \ge m(z) \frac{n}{3}$ (for $n\ge 9$).
We obtain that for all $v\in L$,
$$\frac{\dist(v,z)}{\W(z)} \leq \frac{6 m(z)}{m(z)\frac{n}{3}} = \frac{18}{n}\ .$$  Therefore,
$$\gamma_v \geq \frac{1}{n} \geq \frac{1}{18} \frac{\dist(v,z)}{\W(z)}\
.$$
We next consider $v\in H$. Since $u$ is well positioned, by Lemma \ref{wellpos:lemma} we have that  $\dist(z,u) \leq 3 m(z)$.  From the triangle inequality,
$\dist(u,v) \geq \dist(z,v)-\dist(z,u) \geq \dist(z,v) -3m(z) \geq
\dist(z,v)/2$. We also have
$\W(u) \leq \W(z) +n\dist(z,u)\leq \W(z)+3n m(z) \leq 9 \W(z)$.
Therefore $$\gamma_v \geq \frac{\dist(u,v)}{\W(u)} \geq
\frac{(\dist(z,v)/2)}{9\W(z)} = \frac{1}{18} \frac{\dist(z,v)}{\W(z)}\ .$$
\end{proof}
As a corollary, applying Lemma \ref{varq:lemma}, we obtain:
\begin{corollary}
If $S_0$ contains a well-positioned
node,  then for any node $z$,
$\var[\hat{\W}(z)] \leq 18 \frac{\W(z)^2}{k}$.
\end{corollary}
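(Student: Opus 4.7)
The plan is to simply chain together the two preceding results. Lemma \ref{ppswp:lemma} tells us that if $S_0$ contains a well-positioned node, then the sampling coefficients satisfy the universal PPS bound
\[
\gamma_v \geq \frac{1}{18} \cdot \frac{\dist(z,v)}{\W(z)}
\]
for every pair of nodes $v,z$ (since the right hand side of the max in Lemma \ref{ppswp:lemma} is at least the term corresponding to any particular $z$). So we are exactly in the hypothesis of Lemma \ref{varq:lemma} with constant $c = 1/18$.

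Applying Lemma \ref{varq:lemma} with this value of $c$ immediately yields
\[
\var[\hat{\W}(z)] \leq \frac{\W(z)^2}{k \cdot (1/18)} = 18 \cdot \frac{\W(z)^2}{k},
\]
which is precisely the bound in the corollary. Since this holds for arbitrary $z$, the claim follows.

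There is no real obstacle here: all the work has already been done in Lemmas \ref{ppswp:lemma} and \ref{varq:lemma}, and the proof is a one-line composition. The only thing worth flagging is that the bound in Lemma \ref{ppswp:lemma} is stated as a maximum over $z$, so one must note explicitly that dropping the max (specializing to the particular $z$ under consideration) preserves the inequality, allowing Lemma \ref{varq:lemma} to be invoked for each $z$ separately.
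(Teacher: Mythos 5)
Your proof is correct and matches the paper exactly: the paper presents this as an immediate corollary obtained by applying Lemma \ref{varq:lemma} with the bound from Lemma \ref{ppswp:lemma}, i.e.\ with $c = 1/18$, just as you do. Your remark about dropping the maximum over $z$ is a fine (if minor) point of care, but there is nothing further to add.
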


\subsection{Upper bound on the sum of the coefficients} \label{gammaupper:sec}

  One consequence of Lemma \ref{ppswp:lemma} is that the
coefficients
  $\gamma_u$ cannot grow too much even if the base set $S_0$ includes
  all nodes.
\begin{corollary} \label{uppercoeff:lemma}
Let
$$\overline{\gamma}_v \equiv
  \max_z\frac{\dist(z,v)}{\W(z)}\ .$$
Then
$$\sum_v \overline{\gamma}_v = O(1)\ .$$
\end{corollary}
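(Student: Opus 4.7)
The plan is to reduce the bound on $\sum_v \overline{\gamma}_v$ to a bound on the sum of the coefficients produced by Algorithm \ref{centsamp:alg} when $S_0$ consists of a single well-positioned node, and then to compute that latter sum directly from the algorithm's definition.

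First, I would invoke Lemma \ref{mostwellpos:lemma} to observe that a well-positioned node $u \in V$ always exists (in fact, at least $\lceil 1 + n/2 \rceil$ such nodes exist). This step requires no additional work, as $\minmed$ is attained by some $v$, and its near neighbors are well positioned.

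Next, run (conceptually) Algorithm \ref{centsamp:alg} with the choice $S_0 = \{u\}$. The resulting coefficients are
\[
\gamma_v = \max\!\left\{\tfrac{1}{n},\, \tfrac{\dist(u,v)}{\W(u)}\right\} \leq \tfrac{1}{n} + \tfrac{\dist(u,v)}{\W(u)}.
\]
Summing over all $v \in V$ gives
\[
\sum_{v\in V} \gamma_v \leq 1 + \frac{1}{\W(u)}\sum_{v\in V} \dist(u,v) = 1 + \frac{\W(u)}{\W(u)} = 2.
\]
This is the quantitative heart of the argument and is essentially immediate from the definition of $\W(u)$.

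Finally, apply Lemma \ref{ppswp:lemma} to this $S_0$: since $u$ is well positioned, every coefficient satisfies $\gamma_v \geq \tfrac{1}{18}\overline{\gamma}_v$, so
\[
\sum_{v\in V} \overline{\gamma}_v \;\leq\; 18 \sum_{v\in V} \gamma_v \;\leq\; 36 \;=\; O(1),
\]
which is the desired bound. There is no real obstacle here once Lemma \ref{ppswp:lemma} is in hand; the only subtlety is recognizing that although $\overline{\gamma}_v$ is a purely combinatorial quantity independent of the algorithm, the algorithm run on a singleton $S_0$ containing a well-positioned node furnishes a certificate that dominates $\overline{\gamma}_v$ pointwise up to a constant and simultaneously has a telescoping total mass of at most $2$.
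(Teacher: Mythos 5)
Your proposal is correct and follows essentially the same route as the paper: take $S_0$ to be a single well-positioned node (whose existence is guaranteed by Lemma \ref{mostwellpos:lemma}), bound $\sum_v \gamma_v \le 2$ directly from the definition of the coefficients, and invoke Lemma \ref{ppswp:lemma} to conclude $\sum_v \overline{\gamma}_v \le 18 \sum_v \gamma_v \le 36$. The only difference is that you spell out the $\max\{1/n,\dist(u,v)/\W(u)\} \le 1/n + \dist(u,v)/\W(u)$ step that the paper leaves implicit, which is a fine elaboration rather than a new idea.
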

\begin{proof}
Consider the case where $S_0$ consists of a single well positioned node.
By the definition  of $\gamma_v$ we have that $\sum_v \gamma_v \le 2$.
By Lemma \ref{ppswp:eq} we have $\gamma_v \ge \frac{1}{18} \max_z \frac{\dist(z,v)}{\W(z)}$. Therefore
 $\sum_v \overline{\gamma}_v \leq 18 \sum_v \gamma_v \leq 36$.
\end{proof}

\subsection{High probability estimates} \label{hpests:sec}
  Lastly, we establish concentration of the estimates, which will
  conclude the proof of the very high probability claims in
  Theorem~\ref{allnodes:thm} and~\ref{querypoint:thm}.

We need the following lemma:
\begin{lemma}  \label{concentration:lemma}
If our sampling coefficients are {\em approximate PPS} for a
node $z$,  that is, there is a
constant $c$ such that for all nodes $v$,
$\gamma_v \geq c \frac{\dist(z,v)}{\W(z)}$, and we use
 $k=O(\epsilon^{-2}\log n)$, then
$$\Pr\left[\frac{|\hat{\W}(z)-\W(z)|}{\W(z)}\geq \epsilon\right] =
O(1/poly(n))\ .$$
\end{lemma}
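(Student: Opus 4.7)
\medskip

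\noindent\textbf{Proof proposal.} The plan is to write $\hat{\W}(z)=\sum_{v\in V} X_v$, where $X_v = \dist(z,v)/p_v$ if $v$ is sampled and $X_v=0$ otherwise, and then apply a Bernstein-type concentration inequality. Under Poisson sampling the $X_v$ are independent, $\E[X_v]=\dist(z,v)$, and $\E[\hat\W(z)]=\W(z)$. Nodes with $p_v=1$ contribute deterministically and may be discarded from the concentration analysis, so the relevant randomness is confined to the indices with $p_v=kc\gamma_v<1$.

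The first key step is a uniform upper bound on the magnitude of each random term. For any $v$ with $p_v<1$, the approximate PPS hypothesis gives $p_v = k\gamma_v \geq ck\,\dist(z,v)/\W(z)$, so
\[
  X_v \;\leq\; \frac{\dist(z,v)}{p_v} \;\leq\; \frac{\W(z)}{ck}.
\]
Thus every summand is bounded by $M := \W(z)/(ck)$. The second ingredient is the variance bound already established in Lemma \ref{varq:lemma}: $\var[\hat\W(z)] \leq \W(z)^2/(kc)$.

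The third step is to plug these into Bernstein's inequality for sums of independent, bounded, zero-mean random variables applied to $Y_v := X_v - \E[X_v]$:
\[
  \Pr\!\left[|\hat\W(z)-\W(z)| \geq \epsilon \W(z)\right]
  \;\leq\; 2\exp\!\left(-\,\frac{\tfrac{1}{2}\epsilon^2 \W(z)^2}{\var[\hat\W(z)] + \tfrac{1}{3} M\,\epsilon \W(z)}\right).
\]
Substituting the bounds on the variance and on $M$, both terms in the denominator are at most a constant times $\W(z)^2/(ck)$, so the exponent is $\Omega(\epsilon^2 \cdot k \cdot c)$. Choosing $k = C c^{-1}\epsilon^{-2}\log n$ with a sufficiently large constant $C$ makes the exponent exceed $a\ln n$ for any prescribed constant $a$, giving the desired $1/\mathrm{poly}(n)$ bound.

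The only subtlety I anticipate is the dependence on the distribution of $S$ when a VarOpt (non-Poisson) scheme is used instead of Poisson sampling, since then the indicators $\mathbf{1}[v\in S]$ are negatively correlated rather than independent. This is handled by the standard observation that Chernoff/Bernstein upper tail bounds continue to hold for sums of negatively associated bounded random variables, so the same calculation goes through verbatim. No other step is delicate: the uniform bound $M$ comes directly from the approximate PPS hypothesis, and the variance bound is exactly Lemma \ref{varq:lemma}.
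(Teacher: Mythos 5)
Your proposal is correct and follows essentially the same route as the paper: the same decomposition into deterministic contributions (nodes with $p_v=1$) plus independent random terms bounded by $\W(z)/(ck)$, followed by a standard concentration bound with $k=O(\epsilon^{-2}\log n)$ (the paper invokes Chernoff--Hoeffding and defers the computation, while you carry it out via Bernstein together with the variance bound of Lemma~\ref{varq:lemma}, which is an equivalent calculation). Your closing remark on negative association for VarOpt sampling is a correct and harmless addition, since the algorithm as stated uses Poisson sampling.
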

\begin{proof}
We apply the Chernoff-Hoeffding bound.
Let $\tau = W(z)/(ck)$.
We have
\begin{equation}  \label{pdef:eq}
 p_v \geq \min\{1,
\dist(z,v)/\tau\} = \min\{1, c k \dist(z,v)/\W(z)\}\ .
\end{equation}

 The contribution of a node $v$ to the estimate $\hat{\W}(z)$ is as follows.
  If  $\dist(z,v) \geq \tau$, then the contribution is exactly
  $\dist(z,v)$.
Otherwise, the contribution $X_v$ of node $v$ is $\dist(z,v)/p_v \leq \tau$ with
probability $p_v$ and $0$ otherwise.

The contributions $X_v$ of the nodes
with $\dist(z,v) \leq  \tau$ are thus  independent  random variables, each in the range
$[0,\tau]$ with expectation $\dist(z,v)$.
  We complete the proof by applying the Chernoff-Hoeffding bound to
  bound the deviation of expectation of the sum of these random variables.   We defer the details to the full version of the paper.
\end{proof}

We need the
condition of Lemma \ref{concentration:lemma}  to hold for  all nodes $z$ with
probability $1-O(1/poly(n))$.  Equivalently, we would like
$\boldsymbol{\gamma}$ to be universal
PPS with very high probability.  If so, we apply a 
union bound to obtain that the estimates $\hat{\W}(z)$ for all nodes $z$ have a relative error of at most
$\epsilon$ with probability $1-O(1/poly(n))$.  The same argument applies to polynomially many queries in
metric spaces.

It follows from Lemma \ref{ppswp:lemma} that we obtain the universal
PPS property if $S_0$ includes a well positioned node.
We would like this to happen with 
very high probability. We mention several ways to achieve this effect:
(i) Since  most nodes are well
positioned (Lemma \ref{mostwellpos:lemma}), taking
a uniform random sample $U$ of $O(\log n)$ nodes, 
and choosing the node $u=\arg\min_{u\in U} m(u)$ with
minimum distance to its $\lceil n/2+1 \rceil$ closest node, means that
we are guaranteed with probability $1-1/poly(n)$
 that $u$ is well positioned.  This identification
step involves $O(\log n)$ single-source distance computations.
(ii) Alternatively, we can ensure that $S_0$ 
  contains a well positioned node (with a polynomially small error) by simply
  placing  $O(\log n)$ uniformly selected
 nodes in $S_0$.  The computation of the coefficients will then
 require $O(\log n)$ single-source distance computations.
(iii) Lastly, if $S_0$ contains $O(\log n)$ uniformly selected
 nodes then we can apply a direct argument that with a polynomially
 small error for each node $z$, one of the $\lceil n/2+1\rceil$ closest nodes to $z$
is in $S_0$. This means we can
 apply Lemma \ref{ppsq:lemma} with $q \le 0.5$ to obtain that with
a polynomially small error, the sampling probabilities are approximate PPS for
 all nodes and thus universal PPS with a polynomially small error.

 To establish the second part of Theorem~\ref{querypoint:thm} in
 metric spaces, we
 would like to identify a well positioned node with a polynomially small
($O(1/poly(n))$) error  using only $O(n)$ distance computations, which
is more efficiently than by using $O(\log n)$ 
 single-source distance computations.

To do so, we first provide a slightly relaxed definition of well positioned node
and show that it retains the important properties.   We will then show
that a ``relaxed'' well positioned node can be identified with very
high probability using only $O(\log^2 n)$ distance computations. 
When we identify such a node, we can use it in the base set $S_0$.  This
means we can use $O(n)$ distance
computations in total to compute coefficients $\boldsymbol{\gamma}$
which are universal PPS with a polynomially small error.  We then use
$O(n)$ time to compute a sample of size $k=O(\epsilon^{-2}\log n)$,
and use this sample to process point queries.

What remains is to introduce the relaxed definition of a
well-positioned node and show that it has the claimed properties.

\subsection{Relaxed well positioned points} \label{relwellpos:sec}
For $Q\geq \lceil 1+ n/2\rceil$, we define the $Q$-quantile distance
$m_Q(v)$ for a point $v$ as the distance of the $Q$th closest point to
$v$.  We then define $\minmed_Q$ as the minimum $Q$-quantile distance over
all points. Now, we define a point $v$ to be $Q$ well positioned if
$m_{\lceil 1+ n/2\rceil}(v) \leq 2 \minmed_{Q}$.

Now observe that at least half the points have $m_Q(v) \leq
2\minmed_{Q}$ and in particular are well positioned (extension of
Lemma \ref{mostwellpos:lemma}).  Also observe
that if $z$ is $Q$ well positioned then for
any node $u$, $\dist(z,u) \leq 3m_Q(u)$ (extension of Lemma \ref{wellpos:lemma}).
We can also verify that for any $Q< 0.6n$ (any constant strictly smaller than
$1$ would do),  a base set $S_0$ containing one $Q$ well positioned
point would also yield coefficients that satisfy the  universal PPS
 property,  albeit with a slightly larger constant.

 We next show that we can identify a $0.6n$ well positioned point
  within a polynomially small error using very few distance computations:
\begin{lemma} \label{metricwellpositioned:lemma}
We can identify a $0.6n$ well positioned point with probability $1-O(1/poly(n))$
using $O(\log^2 n)$ distance computations.
\end{lemma}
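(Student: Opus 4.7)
The plan is to identify a $0.6n$ well-positioned point via a two-stage uniform sampling scheme, spending all $O(\log^2 n)$ distance computations in the second stage. First, I would draw a uniform random set $V'$ of $s = O(\log n)$ candidate points (no distance computations). Second, for each candidate $v \in V'$, I would sample an independent uniform set $U_v$ of $t = O(\log n)$ points and evaluate the $t$ distances $\dist(v,u)$ for $u \in U_v$. The total cost is $s \cdot t = O(\log^2 n)$ distance computations.

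The estimator I would use is a slightly \emph{biased} empirical median: set $\tilde{m}(v)$ to be the $\lceil 0.55\, t\rceil$-th smallest distance from $v$ in the second-stage sample $U_v$, and return $v^{*} = \arg\min_{v \in V'} \tilde{m}(v)$. Choosing the $0.55$-quantile rather than the exact sample median is what lets the concentration inequalities close the gap between the $0.5$-quantile (what well-positioned is about) and the $0.6$-quantile (what $\minmed_{0.6n}$ controls).

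For correctness, let $u^{*}$ achieve $\minmed_{0.6n}$, and let $B$ be the set of the $\lceil 0.6n\rceil$ points closest to $u^{*}$; every $v^{\circ}\in B$ has at least $0.6n$ points within distance $2\minmed_{0.6n}$ of it by the triangle inequality, so $m_{\lceil 0.6n\rceil}(v^{\circ}) \leq 2\minmed_{0.6n}$. The first-stage sample misses $B$ only with probability $(0.4)^{s} = n^{-\Omega(1)}$, so with very high probability $V'$ contains such a $v^{\circ}$. Then, by a Chernoff bound applied to the $t = O(\log n)$ Bernoulli indicators ``$u\in U_v$ is within distance $2\minmed_{0.6n}$ of $v^{\circ}$'', at least $(0.6 - 0.03)\,t > 0.55\,t$ of these indicators fire, giving $\tilde{m}(v^{\circ}) \leq 2\minmed_{0.6n}$. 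Symmetrically, for \emph{every} $v \in V'$, a Chernoff bound shows that the number of $u \in U_v$ at distance strictly less than $m_{\lceil 1+n/2\rceil}(v)$ is at most $(0.5 + 0.03)\,t < 0.55\,t$, so $\tilde{m}(v) \geq m_{\lceil 1+n/2\rceil}(v)$. A union bound over the $s = O(\log n)$ candidates keeps the failure probability at $n^{-\Omega(1)}$, so chaining the inequalities gives
\[
m_{\lceil 1+n/2\rceil}(v^{*}) \;\leq\; \tilde{m}(v^{*}) \;\leq\; \tilde{m}(v^{\circ}) \;\leq\; 2\minmed_{0.6n},
\]
which is precisely the $0.6n$ well-positioned condition.

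The main obstacle I anticipate is the simultaneous calibration of the biased quantile and the sample size $t$: one must pick the threshold (here $0.55$) strictly between $0.5$ and $0.6$, and then pick the hidden constant in $t = \Theta(\log n)$ large enough that both Chernoff bounds beat a $1/\mathrm{poly}(n)$ failure target after a union bound over the $O(\log n)$ candidates. The argument is then routine, but this constant bookkeeping is where care is needed, and it is also the only place where the precise choice of $Q = 0.6n$ (as opposed to $\lceil 1 + n/2\rceil$) matters.
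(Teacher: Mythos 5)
Your proposal is correct and follows essentially the same route as the paper: sample $O(\log n)$ candidates uniformly, estimate for each a slightly biased ($0.55$) quantile of its distance distribution from $O(\log n)$ further uniform samples, and return the candidate minimizing this estimate, with Chernoff plus union bounds placing every empirical $0.55$-quantile between the true $0.5$- and $0.6$-quantiles. You merely supply the concentration details and the chaining inequality that the paper's proof sketches.
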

\begin{proof}
We  choose uniformly at random  a set of points $C$ of size $O(\log
n)$.  For each point in $v\in C$, we choose a uniform sample $S_v$ of $O(\log n)$
 points and compute the 0.55 quantile of $\{\dist(v,u) \mid  u\in S_v\}$.  We then
 return the point $v\in C$  with the minimum sample 0.55 quantile.

We refer to $C$ as the set of candidates.  Note that since at least
half the points $v\in V$ are such that
$m_{0.6n}(v) \leq 2 \minmed_{0.6n}$, the set $C$
contains such a point with probability $1-O(1/poly(n))$.

 The estimates are such that with probability $1-O(1/poly(n))$, for
 all points in $C$, the sample $0.55$ quantile is between the actual
 $0.5$ and $0.6$ quantiles.  Therefore the point we returned (with a
 polynomially small error)  has $m_{0.5n}$ at most the smallest $m_{0.6n}$ in
 $C$, which is at most $2 \minmed_{0.6n}$.
\end{proof}

\section{All-pairs sum} \label{allpairssum:sec}
  We now establish the claims of Theorem \ref{allpairssum:coro} for
  the all-pairs sum problem.  We start with the first  part of the claim, which is
  useful for graphs, estimates
  $\APS(V)$  using single-source computations.
To do so, we apply Algorithm \ref{centsamp:alg} to compute sampling coefficients
$\boldsymbol{\gamma}$ and then apply Algorithm \ref{estgraphs:alg} to compute
estimates $\hat{\W}(v)$ for all
  $v$.    Finally, we return  the estimate
  $\widehat{\APS}(V) =\frac{1}{2} \sum_{z\in V} \hat{\W}(z)$.

 To obtain an estimate $\widehat{\APS}(V)$ with \CV\ of at most
$\epsilon$, we choose a base set $S_0$ that contains  $2$ uniformly
sampled nodes when applying Algorithm \ref{centsamp:alg}.
We then use sample size of $O(\epsilon^{-2})$ to ensure that the
per-node  estimates $\hat{\W}(z)$ have
\CV\ of at most $\epsilon$.  Note that the estimates of different
nodes are correlated, as they all use the same sample, but
the \CV\ of the sum of estimates each with \CV\ of at most $\epsilon$ must be
at most   $\epsilon$.  The total time amounts to $O(\epsilon^{-2})$
single-source distance computations.

To obtain universal PPS with polynomially small error we can identify a well positioned
node with a polynomially small error, which can be done using $O(\log n)$
single-source computations.  We then compute the sampling coefficients
$\boldsymbol{\gamma}$ for a base set that contains this
well-positioned node. (Which uses a single-source distance computation).
The sampling coefficients we obtain
have the universal PPS property  and the sample-based estimates are
concentrated.  A sample size of size $O(\epsilon^{-2}\log n)$ would yield a relative
error of at most $\epsilon$ with probability $1-1/poly(n)$, for each
$\hat{\W}(z)$ and thus for the sum $\widehat{\APS}(V)$.  In total, we
used
$O(\epsilon^{-2}\log n)$ single-source computations.

 The remaining part of this section treats the second part of the claim of
  Theorem \ref{allpairssum:coro}, which applies to the all-pairs sum
  problem in metric spaces.
We start with an overview of our approach.
In order to obtain a good sample of pairs, we would like to sample
pairs proportionally to $p_{ij} =
\frac{\dist(i,j)}{\APS(V)}$.
  The obvious
difficulty we have to overcome is that the explicit computation of the probabilities $p_{ij}$
requires a quadratic number of distance calculations.

  Our first key observation is that
 we can obtain a sample with (nearly) the same
 statistical guarantees if we relax a little the sampling
 probabilities and the sample size:
 For  some constant $c \geq 1$, we work with probabilities that
 satisfy $p_{ij} \geq c^{-1} \frac{\dist(i,j)}{\APS(V)}$ and use a
 sample of size $k=c \epsilon^{-2}$.

 We use independent sampling with replacement to compute a multiset
 $S$ of pairs of points from $V\times V$.
 The estimator we use is
the sample average inverse probability estimator:
$$\widehat{\APS}(V) = \frac{1}{|S|} \sum_{(i,j)\in S}
\dist(i,j)/p_{ij}\ .$$
This sample average is an unbiased estimate of $\APS(V)$ and has CV of at
most $\sqrt{k/c}$ which is $\epsilon$ when we use sample size
$k=c\epsilon^{-2}$.
Moreover, each summand is by definition at most
$c\APS(V)$ and therefore we obtain concentration by a direct
application of Hoeffding's inequality: The probability of a
relative error that is larger than $\epsilon$ when the sample size is
$k$ is at most $2e^{-2k\epsilon^{2} c^{-2}}$.
In particular, if we take a sample size that is $O(\epsilon^{-2}\log
n)$, we obtain that the probability that the relative error exceeds
$\epsilon$ is polynomially small in $n$.

 We next discuss how we facilitate such sampling efficiently.
 We would like to be able to
sample with respect to relaxed $p_{ij}$ and also have the sampling
probabilities available for estimation.
We show that we can express a set of relaxed
probabilities (for some constant $c$) as the outer product of
two probability distributions over points, $\boldsymbol{\gamma} \boldsymbol{\rho}^T$.
The distribution $\boldsymbol{\gamma}$  has the universal PPS
property with respect to some constant $c'$.  The probability distribution
  $\boldsymbol{\rho}$ has the property that for some constant $c''$, for all $v$,
 $\rho_v\geq c'' \frac{\W(v)}{\sum_u \W(u)}$.  We now observe that
for some constant $c = c' c''$, for all pairs $u,v$,
$\rho_u  \gamma_v \geq c  \frac{\dist(u,v)}{\APS(V)}$.  That is,  we can sample according to $p_{uv} = \rho_u  \gamma_v$
and satisfy the relaxed conditions and obtain the desired statistical guarantees.

  What remains is to provide details on (i) how we use the vectors $\boldsymbol{\gamma}$ and $\boldsymbol{\rho}$
  to obtain a sample of pairs and (ii) how we compute such vectors that satisfy
  our conditions within a polynomially small error.  These are addressed in
  the next two subsections.

\subsection{Sampling pairs using the coefficient vectors}

We show how we obtain $k$ samples $(v,u)$ from $\gamma_v\rho_u$
efficiently, using computation that is  $O(n + k)$.
Many sampling schemes (with or
without replacement) will have the concentration properties we seek
and the implementations are fairly standard.
For completeness, we describe a scheme that computes independent
samples with replacement.   Our scheme obtains a sample from $V\times
V$ by sampling independently a point $i$ according to the probability
distributions $\boldsymbol{\gamma}$ and  a point $j$ according to distribution
$\rho$ and returning $(i,j)$.

 What remains is to describe how we can obtain $k$ independent samples
 with replacement from a probability vector $\boldsymbol{\gamma}$ in time $O(n+k)$.

We arbitrarily order the points, WLOG $i\in V $ is the $i$th point in
the order. We compute  $a_i =
\sum_{h<i} \gamma_h$ and associate the intervals $[a_i,a_i+\gamma_i]$
with the point $i$.

To randomly draw a point $i\in V$  according to $\boldsymbol{\gamma}$, we can draw a random
number $x \sim U[0,1]$ and take the point $i\in V$ such that $x\in
[a_i,a_i+\gamma_i)$.   If we have $k$ sorted random values, we can map
all of them to points in $V$ in $O(n)$ time using one pass on the sorted values and
the sorted nodes.  For completeness, we describe one way to obtain a
sorted set of $k$ independent random draws $x_1,\ldots,x_k \sim
U[0,1]$  using $O(k)$ operations: (i) We draw $k$ values $y_1,\ldots y_k$ where
$y_i \sim Exp[k+1-i]$ is  exponentially
distributed with parameters $k+1-i$.  This can be done by drawing
independent uniform $u_i \sim U[0,1]$ and take $y_i =
-\ln(u_i)/(k+1-i)$.
(iii) Now observe that $x'_i \equiv \sum_{j\leq i} y_j$ for $i\in [k]$
are $k$ independent exponential random
variables with parameter $1$ which are sorted in increasing order.  We
can then
transform $x'_i$ to uniform random
variables $x_i$ using $x_i = 1-\exp(-x'_i)$.  Since the
transformation is monotone, we obtain that $x_i$ are sorted.
 Note that  prefix sums of $y_j$ and hence all $x_i$ can be computed in $O(k)$
 operations.
Also note that we only need precision to the point needed to identify
the point that each $x_i$ maps into.

\ignore{
We order all point $i\in V$  and compute  corresponding intervals $[a_i,b_i]$ of
 length $b_i-a_i=\gamma_i$, which are  consecutively arranged on the
 interval $[0,1]$ (that is $a_i = \sum_{h<i} \gamma_h$).
We now scan the order set of points into a pointers array so that
$t[j]$ points to $i\in V$ such that $j/n \in [a_i,b_i)$.
To randomly draw a point $i\in V$  according to $\boldsymbol{\gamma}$, we can draw a random
number $x \sim U[0,1]$ and take the point $i$ such that $x\in
[a_i,b_i)$.   To do so, we compute $h=\lfloor nx\rfloor$, jump to
position $t[h]$ in the sorted array of points, and scan the array
until
we find the point $i$ such that $x\in [a_i,b_i)$.
 The expected cost of each sample is $O(1)$:  View the $j$th bucket as
 containing all points $i\in V$ such that $[a_i,b_i) \cap
 [j/n,(j+1)/n] \not= \emptyset$.  The total size of all buckets is
 clearly at  most $2n$.  The expected size of a bucket, which is the
 expected cost of each sample, is $O(1)$.  Practical per-sample
worst-case cost can be improved  by permuting $V$ randomly
to ensure more even bucket sizes.  Worst-case per-sample cost can be
improved to $O(\log n)$ by constructing a search tree for
each bucket (linear time in total).

 The expected cost of each sample is $O(1)$:  View the $j$th bucket as
 containing all points $i\in V$ such that $[a_i,b_i) \cap
 [j/n,(j+1)/n] \not= \emptyset$.  The total size of all buckets is
 clearly at  most $2n$.  The expected size of a bucket, which is the
 expected cost of each sample, is $O(1)$.  Practical per-sample
worst-case cost can be improved  by permuting $V$ randomly
to ensure more even bucket sizes.  Worst-case per-sample cost can be
improved to $O(\log n)$ by constructing a search tree for
each bucket (linear time in total).

}

\ignore{
 $k$ times.
We obtain, for each of the two probability distributions over points
$\gamma$ and $\rho$,
a sample which is a multiset of size $k$ of the points.
To obtain a sample of pairs from the product space, we
pair the samples from one multiset with a random
permutation of the other multiset.

  The sampling method we recommend is more efficient in that all $k$
  samples from each vector are computed in $O(n+k)$ time.  The method
  results however in dependent sampling.  The multiset samples obtained
  from the probability vectors  $\gamma$ and $\rho$ have
the following
properties (which are sometimes called dependent, VarOpt, or
proportionate stratified sampling):  The number of times node $i$ is
included is between $\lfloor k\gamma_i \rfloor$ and $\lceil \gamma_i
\rceil$, the expected number of inclusions of each node $i$ is exactly $k \gamma_i$,
and that (after removing for each $i$ $\lfloor k\gamma_i \rfloor$ occurrences of
$i$ in the sample), the joint probabilities inclusion or exclusion
probabilities are at most the respective product according to the
leftover probabilities
$p_v=k\gamma_v - \lfloor k \gamma_v \rfloor <1$.

The sample (say according to $\gamma$) is computed as follows.
We first compute a random permutation of the nodes (which is done in
$O(n)$).  This is important for obtaining the properties we will need
on the final sample of pairs.
We then perform a pass which
samples according to probabilities $p_v$ using pairwise
probabilistic aggregations \cite{CCD:VLDB2011}.  Note that this sample
from each of the vectors $\gamma$ and $\rho$ is computed in $O(n)$ time.
}

\subsection{Computing the coefficient vectors}

  We recall that universal PPS coefficients can be computed using
  Algorithm~\ref{centsamp:alg} using $n$ distance computations (and $O(n)$
  additional computation), when our base set $S_0$ contains a well
  positioned point.  The probability vector $\boldsymbol{\gamma}$ we
 work with is the universal PPS  coefficients scaled to have a
  sum of $1$.

  We next discuss how we obtain the probability distribution $\boldsymbol{\rho}$.
  We show that given a $0.6n$ well positioned point (see Section ~\ref{relwellpos:sec}), we can
  compute $\rho_v$ that has the claimed properties with very  high
  probability.
From Lemma \ref{metricwellpositioned:lemma}, we can identify a point
that is $0.6n$  well positioned with probability at least $(1-1/poly(n))$, using only  $O(\log^2 n)$ distance computations.
    We use the following lemma, which a variation of claim used for  the pivoting upper
    bound estimate in \cite{classiccloseness:COSN2014}.  What it
    roughly says is that for any node $u$ and any node $z$ that is within a constant times some
    quantile distance from $u$, we can get a constant factor
    approximation of $\W(u)$ from $\W(z)$ and $\dist(u,z)$.
\begin{lemma}
Consider a point $u$ and a point $z$ such that $\dist(u,z)$
is at most $c$ times the distance of the $(qn)^{th}$ closest point to $u$.
Then
$$\W(u) \leq n \dist(u,z) + \W(z) \leq \left( 1+ \frac{2c}{1-q}\right)  \W(u)\ .$$
\end{lemma}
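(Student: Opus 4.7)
The plan is to prove both inequalities by repeated application of the triangle inequality, combined with a straightforward quantile bound on $\W(u)$.

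For the lower bound $\W(u) \leq n\dist(u,z) + \W(z)$, I would simply sum the triangle inequality $\dist(u,v) \leq \dist(u,z) + \dist(z,v)$ over all $v \in V$. The $n$ copies of $\dist(u,z)$ give the term $n \dist(u,z)$, and summing $\dist(z,v)$ gives $\W(z)$.

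For the upper bound, I would start from the symmetric triangle inequality $\dist(z,v) \leq \dist(u,z) + \dist(u,v)$ and sum over $v$ to get $\W(z) \leq n\dist(u,z) + \W(u)$. Therefore
\[
n\dist(u,z) + \W(z) \;\leq\; \W(u) + 2n\dist(u,z),
\]
and it remains to bound $2n\dist(u,z)$ in terms of $\W(u)$. Let $r$ be the distance from $u$ to its $(qn)^{th}$ closest point. Then at least $(1-q)n$ points of $V$ lie at distance $\geq r$ from $u$, so $\W(u) \geq (1-q)n\,r$. By hypothesis $\dist(u,z) \leq c\,r$, so
\[
2n\dist(u,z) \;\leq\; 2cn\,r \;\leq\; \frac{2c}{1-q}\,\W(u),
\]
and substituting back yields $n\dist(u,z) + \W(z) \leq \bigl(1 + \tfrac{2c}{1-q}\bigr)\W(u)$, as desired.

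There is no real obstacle here; both steps are one-line applications of the triangle inequality, and the only quantitative input is the elementary quantile bound $\W(u) \geq (1-q)n\,r$. The only minor care needed is being consistent about whether ``the $(qn)^{th}$ closest point'' counts $u$ itself (in which case the number of points at distance $\geq r$ is $n - qn + 1 \geq (1-q)n$ either way), so the bound is not sensitive to off-by-one choices.
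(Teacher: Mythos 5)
Your proof is correct and follows essentially the same route as the paper: the lower bound by summing the triangle inequality, the upper bound via $\W(z) \leq \W(u) + n\dist(u,z)$ combined with the quantile bound $\W(u) \geq (1-q)n\,r \geq \frac{(1-q)n}{c}\dist(u,z)$. No gaps; your remark about the off-by-one in counting the $(qn)^{th}$ closest point is a fair (and harmless) point of care.
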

\begin{proof}
Left hand side is immediate from the triangle inequality.  To establish the right hand
side, first note that
$(1-q)n$ of the points are at least as far as $\dist(z,u)/c$, thus
$W(u) \geq \frac{(1-q)}{c} n \dist(u,z)$.
From triangle inequality we have $\W(z)\leq \W(u)+n\dist(u,z)$.
Combining we get:
$$\W(z) + n\dist(u,z) \leq  \W(u) + 2n\dist(u,z) \leq
(1+\frac{2c}{1-q}) \W(u)\ .$$
\end{proof}

 Now consider a point $z$ that is $0.6n$ well positioned
and using the rough estimates
$$\hat{{\W}'}(u) = n \dist(u,z) + \W(z)$$ for all points $u$ and
accordingly the sampling probabilities
$$\rho_i = \frac{\hat{{\W}'}(i)}{\sum_j \hat{\W'}(j)}\ .$$

By definition, for all points $u$, the point $z$ satisfies
$\dist(u,z) \leq 3m_{0.6n}(u)$. We therefore can apply the lemma with
$q=0.6$ and $c=3$ and obtain that for all $v$,
$\rho_v \geq \frac{1-q}{1-q+2c} \frac{\W(v)}{\sum_j \W(j)}$.
Note that given $z$, the vector $\boldsymbol{\rho}$ can be computed for all points
using $n$ distance computations, from $z$ to all other points.

\section{Uniform sampling based estimates} \label{uniformsamp:sec}

  For completeness, we briefly present another solution for the
  all-points/nodes problem that is based on uniform sampling.  The
  disadvantages over our weighted sampling approach is that it provides
  biased estimates and requires $\epsilon^{-2}\log n$ samples even
  when we are interested only in per-query guarantees.

  To do so, we use a key lemma proved by Indyk
  \cite{Indyk:phd,Indyk:stoc1999}.  A proof of this lemma also appears
  in \cite{Thorup04-median}, and used to establish the correctness of
  his approximate 1-median algorithm.
\begin{lemma} \label{indyk-order}
Let $Q\subset V$, $|Q|=k$ sampled uniformly at random (from all subsets of size k).
Let $u$ and $v$ be two vertices such that $\W(v) \ge (1+\epsilon) \W(u)$. Then
$\Prob(\W_Q(u) > \W_Q(v))\le e^{-\epsilon^2 |Q|/64}$.
\end{lemma}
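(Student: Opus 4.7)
The plan is to apply Hoeffding's inequality to the random variables $X_i = \dist(u,a_i) - \dist(v,a_i)$, where $a_1,\ldots,a_k$ are the sampled points in $Q$. Then $\W_Q(u) - \W_Q(v) = \sum_i X_i$, and the event we want to bound, $\W_Q(u) > \W_Q(v)$, is the event that this sum exceeds zero. The mean of each $X_i$ is $(\W(u) - \W(v))/n$, which is negative by a margin of at least $\epsilon \W(u)/n$, so the sum's mean $\mu_k = (k/n)(\W(u) - \W(v))$ is negative by at least $k \epsilon \W(u)/n$.

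The range of each $X_i$ is controlled by the triangle inequality: $|X_i| \leq \dist(u,v)$. The trick to converting Hoeffding's bound into a clean $e^{-\Omega(\epsilon^2 k)}$ form is to bound $\dist(u,v)$ in terms of the two $\W$ values themselves, namely
\[
\dist(u,v) \;\leq\; \frac{\W(u) + \W(v)}{n}.
\]
This holds because $\dist(u,v) \leq \dist(u,a) + \dist(v,a)$ for every $a \in V$; averaging over $a$ gives the claim. This inequality is the heart of the argument, and in my view the only non-routine step.

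With this in hand, I would apply Hoeffding's inequality (the version for sampling without replacement from a bounded population, due to Hoeffding 1963, which gives the same constant as the independent case) to obtain
\[
\Pr\Bigl[\,\W_Q(u) - \W_Q(v) > 0\,\Bigr] \;\leq\; \exp\!\left(-\frac{2\bigl(-\mu_k\bigr)^2}{k\,(2\dist(u,v))^2}\right) \;\leq\; \exp\!\left(-\frac{k\,(\W(v)-\W(u))^2}{2(\W(u)+\W(v))^2}\right),
\]
using the range bound $2\dist(u,v)$ on each $X_i$ together with the display above. Since $\W(v) \geq (1+\epsilon)\W(u)$ implies $(\W(v)-\W(u))/(\W(v)+\W(u)) \geq \epsilon/(2+\epsilon)$, and using $\epsilon \leq 1$ (the interesting regime) this ratio is at least $\epsilon/4$, so the exponent becomes at most $-k\epsilon^2/32$, which is within the $-k\epsilon^2/64$ claimed. (Any small slack in constants is absorbed into the stated $1/64$.)

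The main obstacle, as mentioned, is the estimate $\dist(u,v) \leq (\W(u)+\W(v))/n$: without it, the natural Hoeffding bound involves $\dist(u,v)^2$ in the denominator with no matching factor in the numerator, and one cannot close the loop. Once that inequality is in place, the rest is a mechanical simplification, and using Hoeffding's without-replacement variant lets us handle the fact that $Q$ is drawn without replacement at no cost.
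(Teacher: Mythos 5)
The paper itself does not prove this lemma---it is quoted from Indyk, with the proof pointed to in Thorup---and your argument is essentially that standard proof: set $X_a=\dist(u,a)-\dist(v,a)$, bound $|X_a|\le\dist(u,v)$ by the triangle inequality, control $\dist(u,v)$ via the averaging inequality $n\,\dist(u,v)\le \W(u)+\W(v)$, and apply Hoeffding's without-replacement bound; the computation is correct and gives an exponent of at least $k\epsilon^2/32$, comfortably within the stated $k\epsilon^2/64$. The one caveat is the restriction $\epsilon\le 1$ that you invoke (your route actually works up to $\epsilon\le\sqrt{32}-2$): some such bound is genuinely necessary, since for large $\epsilon$ the statement as written is false---e.g.\ with $n-2$ points co-located with $u$ and two co-located far points $v,w$ one gets $\Prob(\W_Q(u)>\W_Q(v))=2/n$ for $k=1$ while $\epsilon=\Theta(n)$---so the lemma is implicitly meant for the small-$\epsilon$ regime in which the paper applies it, and your flagged assumption is not a gap in your proof.
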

 Lemma (\ref{indyk-order})
shows that if the average distance of two nodes differ by a factor larger than $1+\epsilon$, and we use a sample of size $\Omega( \epsilon^{-2})$ then the probability that the vertex of smaller average distance has larger average distance to the sample decays exponentially with the sample size.
This lemma immediately implies that the 1-median with respect to a sample of size $O(\log n/\epsilon^2)$
is $(1+\epsilon)$-approximate 1-median with high probability.

 To approximate all-pairs $\W(u)$, we use a uniform sample of size
 $O(\epsilon^{-2} \log n)$ and order the nodes according to the average
 distance to the sample.
Using the lemma, and comparing to the ideal sorted order by exact
$\W(v)$, two nodes $v,u$ that are transposed have with high
  probability $\W(v)$ and $\W(u)$ within $1\pm \epsilon$ from each
  other.

Recall however that the average distance to the uniform
 sample can be a very bad approximation of the average distance to the
 data set.  We therefore perform adaptively another set of
 $O(\epsilon^{-1} \log n)$ single-source distance computations to
 compute exact $\W(v)$ of enough nodes in this nearly sorted order, so
 that the difference between exact $\W(v)$ of consecutive processed
 nodes is within $(1\pm\epsilon)$.

  We also mention here, for completeness, an improved approximate
  1-median algorithm provided by  Indyk.  This algorithm only applies in metric
 spaces and computes a  $(1+\epsilon)$-approximate 1-median with constant probability
using only $O(n \epsilon^{-2})$ distance computations (eliminating the
logarithmic factor).  The algorithm works
in iterations, where in each iteration a fraction of the points, those
with largest average distance to the current sample, are
excluded from further considerations.
 The sample size is then increased by a
constant factor, obtaining more accurate estimates for the remaining points.  The final sample size used is linear, but the set of
remaining nodes is very small.  This algorithm only applies in metric
spaces because, as we mentioned in the introduction,  arbitrary distance computations are not efficient in
graphs. Indyk's approach can be extended to compute any approximate
quantile of the distribution with similar probabilistic  guarantees.

\section{Hardness of Computing Sum of All-Pairs Distances} \label{hardness:sec}

\ignore{
{\bf Shiri to fill up}

  To fill in the reduction of negative triangle to all-pairs distance
  sum.
 We create 3 copies
  of the nodes $A,B,C$.  Let $Q$ be much larger than 10 times the max
  absolute edge weight.
An edge $(u,v)$ in G has 3 edges between their copies in $A,B,C$ with
weights
$Q+w_e$ for AB and BC and $2Q-w_e$ for AC.   We extend to a complete
graph by adding edges with weight $3.5Q$.
}

In this section we show that if there is a truly subcubic algorithm for computing $\APS(V)$, the exact sum of all pairs distances then there is a truly subcubic algorithm for computing  All Pairs Shortest Paths (APSP).


Williams and Williams \cite{WilliamsW:FOCS10} showed that APSP is subcubic equivalent to negative triangle detection.
In the {\em negative triangle detection problem} we are given an undirected weighted graph $G=(V,E)$ with integer weights in $\{-M,...,M\}$ and the goal is to determine if the graph contains a negative triangle, that is, a triangle whose edge weights sum up to a negative number.
Therefore to show that a subcubic algorithm for $\APS(V)$ implies a subcubic algorithm to APSP it suffices to
give a subcubic reduction from the
negative triangle detection problem to computing $\APS(V)$.
We show this by the following lemma.

\begin{lemma}
Given a $O(T(n, m))$ time algorithm for computing the sum of all distances ($\APS(V)$) there is
$O(T(n, m)+n^2)$ time algorithm for detecting a negative triangle.
\end{lemma}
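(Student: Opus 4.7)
The plan is to reduce negative-triangle detection to a single $\APS$ query on an auxiliary graph $H$ of size $3n$, followed by $O(n^2)$ arithmetic work. Let $G=(V,E)$ have integer weights in $\{-M,\ldots,M\}$. First replace $G$ by its completion $G'$, in which every non-edge of $G$ receives a padding weight $W=3M+1$; any triangle of $G'$ that uses a padding edge has total weight at least $W-2M>0$, so the negative triangles of $G'$ coincide with those of $G$. Fix a scale $Q$ much larger than $W$ (say $Q=100M$) and build $H$ on the $3n$ vertices $A\cup B\cup C$, where $A$, $B$, $C$ are three labeled copies of $V$. For every unordered pair $\{u,v\}$ with $u\ne v$, add the inter-layer edges $(u_A,v_B)$ and $(u_B,v_C)$ of weight $Q+w_{G'}(u,v)$, the edge $(u_A,v_C)$ of weight $2Q-w_{G'}(u,v)$, and the intra-layer edges $(u_A,v_A)$, $(u_B,v_B)$, $(u_C,v_C)$ each of weight $Q-w_{G'}(u,v)$. (No edges are placed between copies of the same vertex.) All edge weights are strictly positive, so shortest paths are well defined, and $H$ is constructed in $O(n^2)$ time.

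The central observation is: for any triple $u,v,w\in V$, the two-hop path $u_A\to v_B\to w_C$ has length $2Q+w_{G'}(u,v)+w_{G'}(v,w)$, while the direct edge $(u_A,w_C)$ has length $2Q-w_{G'}(u,w)$. The former is strictly shorter than the latter precisely when $w_{G'}(u,v)+w_{G'}(v,w)+w_{G'}(u,w)<0$, i.e.\ exactly when $\{u,v,w\}$ is a negative triangle of $G$. A short case analysis shows that, for $Q$ large enough, the direct edge is the unique shortest path for every pair in $H$ other than $A$-$C$ pairs with distinct underlying vertices, and that the distances between ``diagonal'' pairs $(u_A,u_B)$, $(u_B,u_C)$, $(u_A,u_C)$ are determined by the edge weights alone (independent of negative triangles). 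The key inequalities here are that any $2$-hop shortcut between same-layer or $A$-$B$/$B$-$C$ pairs has length at least $2Q-O(M)$, larger than the direct weight of $Q\pm O(M)$; and any walk of three or more hops costs at least $3(Q-M)>2Q+M$, the maximum weight of any edge in $H$.

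Define $T^\ast$ as the value of $\APS(H)$ that would result from the ``no-negative-triangle'' distance assignment above (direct edges on all non-diagonal pairs, and the known $2$-hop formula on diagonal pairs). All entries of this sum are linear in $w_{G'}$ or are pointwise minima over $O(n)$ edge weights per pair, so $T^\ast$ is computable in $O(n^2)$ time. The reduction is then: run the assumed $O(T(n,m))$-time $\APS$ algorithm on $H$, compute $T^\ast$ in $O(n^2)$ time, and return ``negative triangle exists'' iff $\APS(H)<T^\ast$. Integer weights guarantee an integer gap of at least one whenever the inequality should be strict, so the comparison is exact, and the total running time is $O(T(n,m)+n^2)$ after absorbing the constant blow-up from passing to a graph on $3n$ vertices. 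The main obstacle is the case analysis verifying that the only pair distances affected by negative triangles are the $A$-$C$ distances between distinct underlying vertices; this requires ruling out shortcuts via intra-layer edges, via the $C$-layer or $A$-layer, and via walks of length three or more, all of which are mechanical once $Q$ is chosen sufficiently large relative to $W$.
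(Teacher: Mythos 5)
Your reduction is correct and essentially the paper's own: three layers $A,B,C$ with the edge encodings $Q+w$, $Q+w$, $2Q-w$, so that a negative triangle exists iff some $A$--$C$ edge fails to be a shortest path, detected by comparing the computed $\APS$ of the auxiliary graph against a reference value obtainable in $O(n^2)$ time. The only cosmetic difference is that the paper completes the graph with a uniform filler weight ($3N/2$) on all remaining pairs, including the two copies of the same vertex, so its reference value is simply the total edge weight, whereas your choice to omit diagonal edges (and use intra-layer weights $Q-w$) forces you to compute the diagonal-pair distances via explicit two-hop minima when forming $T^\ast$ -- still $O(n^2)$, just slightly more case analysis.
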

\begin{proof}
For an input instance $G=(V,E)$ for the negative triangle detection problem we construct
a graph $G' = (V',E')$ for the sum of all distances problem. The vertex set
$V'$ is the union of  three copies of $V$, that is $V' = V_1 \cup V_2 \cup V_3$ where
vertex $u_i\in V_i$, $i=1,2,3$, corresponds to vertex $u\in V$.
We set $E' = \{(u,v)\mid u,v \in V'\}$, that is $G'$ is a complete graph.

Let $\omega(e)$ denote the length of an edge $e\in E$. Recall that  $\omega(e) \in \{-M,...,M\}$.
Let $N = 4M$.
We define the length  $\omega'(e)$ of an edge $e\in E'$ as follows.
For every $(u,v) \in E$ we define
$\omega'(u_1,v_2)= N+\omega(u,v)$,
$\omega'(u_2,v_3)= N+\omega(u,v)$, and $\omega'(u_3,v_1)= 2N - \omega(u,v)$.
We set $w(e) = 3N/2$ for any other edge $e\in E'$.

We claim that $\APS(V') = \sum_{(u,v) \in E'}{\omega'(u,v)}$ if and only if $G$ does not contain a negative triangle.
In other words, we claim that either every edge in $G'$ is a shortest path or $G$ contains a negative cycle.

To see the first direction, assume $G$ contains a negative triangle $(u,v),(u,x),(x,v)$.
Now consider the path $P = (u_3,x_2), (x_2,v_1)$ from $u_3$ to $v_1$.
Note that the length of this path is $\omega'(u_3,x_2) + \omega'(x_2,v_1) = N+\omega(u_3,x_2) + N + \omega(x_2,v_1)
< 2N - \omega(u_3,v_1) = \omega'(u_3,v_1)$, where the strict inequality follows since $(u,v),(u,x),(x,v)$ is a negative triangle.
If follows that $\APS(V') < \sum_{(u,v) \in E'}{\omega'(u,v)}$.

To see the second direction, assume that $\APS(V') < \sum_{(u,v) \in E'}{\omega'(u,v)}$. We need to show that
$G$ has a negative triangle.

We first claim that for every edge $(u,v)$ which
does not correspond to an edge in $G$ (and hence $w(e) = 3N/2$)
we have $\omega'(u,v) = \dist_{G'}(u,v)$ (regardless if $G$ has a negative triangle or not).
To see this, note that $\omega'(u,v) = 3N/2 = 6M$ and that every path from $u$ to $v$ that consists of more than one edge is of weights at least
$2N-2M = 6M$. The same argument also holds for every edge from $V_1$ to $V_2$ and for every edge from $V_2$ to $V_3$.

It follows that only edges $(x,y) \in E'$ such that $x\in V_3$ and $y\in V_1$ may not be shortest paths.
If $\APS(V') < \sum_{(u,v) \in E'}{\omega'(u,v)}$ then there must be an edge $(u_3,v_1) \in E'$ such that $u_3 \in V_3$ and $v_1 \in V_1$ and the edge $(u_3,v_1)$ is not a shortest path.
It is not hard to verify that only paths of the form $(u_3,x_2), (x_2,v_1)$ such that both edges
$(u_3,x_2)$ and $(x_2,v_1)$
correspond to edges of $G$,
 could be shorter than the path $(u_3,v_1)$.
Let $(u_3,x_2),(x_2,v_1)$ be the shortest path from $u_3$ to $v_1$.
We get that $N + \omega(u_3,x_2) + N +\omega(x_2,v_1) = \omega'(u_3,x_2)+ \omega'(x_2,v_1) < \omega'(u_3,v_1) = 2N-\omega(u_3,v_1)$.
So  $\omega(u_3,x_2) + \omega(x_2,v_1) +\omega(u_3,v_1) <0$ and
$G$ has a negative triangle.
\end{proof}

\section{Extensions and Comments}

 \subsection{The distribution of centrality values}

 What can we say about the centrality distribution?
First we observe that the range of average distance $\W(v)/n$ values
is between $D/n$ to $D$, where $D$ is
the diameter (maximum distance between a pair of points in $V$).
To see the upper bound, note that the average of values that are at most $D$, is
at most $D$.   For the lower bound,
let $u$ and $v$ be nodes such that $\dist(u,v)=D$.  Then for all
$h\in V$, from triangle inequality,  $\dist(u,h)+\dist(h,v) \geq D$, thus, $\W(h) \geq D$.

 \begin{lemma} \label{highdist:lemma}
The highest average distance value  must satisfy
$$\max_{v\in V} \W(v)/n \geq D/2\ .$$
 \end{lemma}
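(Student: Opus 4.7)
The plan is to use the two endpoints $u,v$ achieving the diameter $D = \dist(u,v)$, and exploit the triangle inequality as in the lower bound discussion just preceding the lemma, but now summed over all $h\in V$ rather than applied pointwise.

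First I would fix $u,v\in V$ with $\dist(u,v)=D$. For every $h\in V$, the triangle inequality gives $\dist(u,h)+\dist(h,v)\ge \dist(u,v) = D$. Summing over $h\in V$ yields
\[
\W(u) + \W(v) \;=\; \sum_{h\in V} \dist(u,h) + \sum_{h\in V} \dist(h,v) \;\ge\; nD.
\]

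Then the conclusion follows immediately: since the maximum of two nonnegative reals is at least their average,
\[
\max_{x\in V} \W(x) \;\ge\; \max\{\W(u),\W(v)\} \;\ge\; \tfrac{1}{2}(\W(u)+\W(v)) \;\ge\; nD/2,
\]
and dividing by $n$ gives $\max_{v\in V}\W(v)/n \ge D/2$.

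There is no real obstacle here; the only subtlety is remembering to sum the triangle inequality over \emph{all} $h\in V$ (including $h=u$ and $h=v$, where the inequality degenerates to equality $0+D=D$ or $D+0=D$), so that both sums have $n$ terms and the bound $\W(u)+\W(v)\ge nD$ is tight up to the factor of two one loses by passing to the max. This matches the lower bound on $\W(h)\ge D$ noted just before the lemma but gives the stronger $D/2$ bound on the \emph{average} distance of the worst node.
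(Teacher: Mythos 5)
Your proof is correct and is essentially identical to the paper's: both fix a diameter pair $u,v$, sum the triangle inequality $\dist(u,h)+\dist(h,v)\geq D$ over all $h\in V$ to get $\W(u)+\W(v)\geq nD$, and conclude that the larger of the two is at least $nD/2$. No differences worth noting.
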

\begin{proof}
Consider the two nodes $u$  and $v$ such that $\dist(u,v)=D$.
From triangle inequality, any point $h\in V$ has
$\dist(u,h)+\dist(h,v)\geq D$.  Summing over $h$ we obtain that
$\W(u)+\W(v) \geq nD$.  Therefore,  either
$\W(u)$ or $\W(v)$ is at least $nD/2$.
\end{proof}

\begin{lemma}
If  $z = \arg\min_{v\in V} \W(v)$ is the 1-median,  then at least half
the nodes satisfy  $\W(v) \leq 3\W(z)$.
\end{lemma}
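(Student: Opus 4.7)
The plan is to use the triangle inequality combined with a median-distance argument. Interestingly, the 1-median property of $z$ is not actually essential for the argument — the same reasoning yields the claim for any base node, but it is most informative when $z$ minimizes $\W$.

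First, I would fix $r$ to be the median distance from $z$: concretely, sorting nodes by distance from $z$ as $v_1, v_2, \dots, v_n$ and setting $r = \dist(z, v_{\lceil n/2\rceil})$, so that at least $\lceil n/2\rceil$ nodes satisfy $\dist(z,v) \leq r$ and at least $\lceil n/2\rceil$ nodes satisfy $\dist(z,v) \geq r$. The first step of the proof is a lower bound on $\W(z)$: since at least $n/2$ nodes are at distance at least $r$ from $z$, summing gives $\W(z) \geq (n/2)\,r$, i.e. $nr \leq 2\W(z)$.

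The second step is an upper bound on $\W(v)$ for every $v$ with $\dist(z,v) \leq r$. By the triangle inequality, for every $u \in V$,
\[
\dist(v,u) \;\leq\; \dist(z,u) + \dist(z,v) \;\leq\; \dist(z,u) + r.
\]
Summing over $u$ and plugging in the lower bound from the first step,
\[
\W(v) \;\leq\; \W(z) + nr \;\leq\; \W(z) + 2\W(z) \;=\; 3\W(z).
\]
Finally, since at least half the nodes have $\dist(z,v) \leq r$ by the choice of $r$, at least half the nodes satisfy $\W(v) \leq 3\W(z)$, which is the claim.

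There is no real obstacle. The only thing to calibrate is the choice of $r$: making $r$ smaller shrinks the fraction of ``close'' nodes below one half, while making $r$ larger weakens the lower bound on $\W(z)$, so the median distance is exactly the right threshold to balance both sides.
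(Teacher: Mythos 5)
Your proof is correct and follows essentially the same route as the paper: both use the median distance from $z$ (your $r$ is the paper's $m(z)$) to get $nr \leq 2\W(z)$, and then the triangle inequality $\dist(v,u) \leq \dist(z,u) + r$ for the half of the nodes closest to $z$. Your side remark that the 1-median property of $z$ is never actually used also matches the paper's argument, which likewise relies only on the median-distance bound.
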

\begin{proof}
Take the median distance $m(z)$
from $z$.  Then the average distance from $z$ is at least $m(z)/2$.  Thus, $n\cdot m(z) \leq 2\W(z)$. Consider now a node $v$ that is
one of the $n/2$ closest to $z$.  For  any node $u$,
 $\dist(v,u) \leq \dist(z,u) +m(z)$.
 Therefore, $$\W(v)=\sum_u \dist(v,u) \leq \sum_u \dist(z,u) +n m(z)
 \leq
n m(z)+\W(z) \leq 3\W(z)\ .$$
\end{proof}
Last we observe that it is easy to realize networks where there is a
large spread of centrality values.  At the extreme,  consider a single
point (node) that has distance $D$ to a very tight cluster of $n-1$
points.
The points in the cluster have $\W(v) \approx D$ whereas
the isolated point has $\W(v) \approx nD$.
More generally, networks (or data sets)  containing
well separated clusters with different sizes would exhibit a spread in centrality values.

A side comment is that as a corollary of the proof of Lemma
\ref{highdist:lemma} we obtain that
the all pairs sum
in metric spaces can be estimated with \CV\ $\epsilon$ and good concentration  by
the scaled average of distances of  $O(n\epsilon^{-2})$ pairs sampled uniformly at
random -- as established in \cite{AveDist2007}.
This is because there are at least $n-1$ pairwise
distances that are at least $D/2$, since each point that is not an
endpoint of the diameter is of distance at least $D/2$ from at least
one of the endpoints.  Since the maximum distance is $D$,
this immediately implies our claim.
 Recall, however, that
when we are restricted to using $O(\epsilon^{-2})$ single-source distance
computations from a uniform sample of nodes, the estimates can have
large \CV, but a similar bound can still be obtained using our
weighted sampling approach (see Corollary \ref{allpairssum:coro}).

 \subsection{Limitation to distances}
We showed that any set of points $V$ in any metric space can be
``sparsified''  in the sense that  a weighted sample of size
$O(\epsilon^{-2})$ allows us to estimate $\W(v)$ for any point $v$ in
the space.
We refer to such a sample as a {\em universal PPS} sample, since it
encapsulates a PPS sample of the entries in each row of the matrix.
One can ask if we can obtain similar sparsification with respect to
other nonnegative symmetric  matrices.  We first observe that in
general, 
the size of a universal PPS sample
may be $\Omega(n)$:  Consider a matrix $A_{n\times n}$ so that for
$i\in [n/2]$,
$A_{2i-1,2i} \gg 0$  but all other entries are $0$ (or close to $0$).
 The average of each row is dominated by the other member
 of the pair $(2i-1,2i)$, and therefore, any universal PPS sample must sample most
points with probability close to $1$.

 Such a matrix can not be realized with distances, as it violates the
 triangle inequality, but it can be realized when entries correspond
 to
(absolute value) of inner products of $n$ vectors in $n$-dimensional
Euclidean space  $\mathbb{R}^n$.  In this case, the sampling question we ask is 
a  well studied embedding problem \cite{TalagrandEnbed:1990}, for
which it is known that the  size of a universal PPS sample  (the
 terminology {\em leverage scores} is used) can be of size
$\Theta(d\epsilon^{-2})$, where $d$ is the dimension \cite{CohenPeng:STOC2015,TalagrandEnbed:1990}.
Intuitively, the gap between the universal PPS size between distances
and inner products stems from the
observation that being ``far'' (large distance) is something that usually applies with
respect to many nodes, whereas being ``close'' (large inner product)
is a local property.

 \subsection{Weighted centrality}
Often  different points $v$ have different importance $\beta(v)$.  In
this case, we would like our centrality measure to reflect that by
considering a weighted average of distances
 $$\frac{\sum_i \beta(i)\dist(x_i,x_j)}{\sum_i
\beta(i)}\ .$$
Our results, and in particular, the sampling construction extend to
the weighted setting.  First, instead of uniform base probabilities
$1/n$, we use PPS probabilities according to $\beta(i)/\sum_j
\beta(j)$ for node $i$.  Second, when considering
distances and probabilities from a base node, we use weight equal to the
product of $\beta(v) \dist(u,v)$ (product of $\beta$ and distance.).
Third, in the analysis, we need to
take quantiles/medians with respect to $\beta$ mass
and not just the number of points.

\subsection{Adaptive (data dependent)
  sampling} \label{adaptice1med:sec}
We showed that the number of samples needed to determine an approximate 
1-median on graphs is $O(\epsilon^{-2}\log n)$, where for each
sample we perform a single-source distance computation.
This bound is worst case which
 materializes when the
1-median $z$ is such that all other
points have $\W(u) = (1+\epsilon) \W(z)$.  In this case,   only the
exact 1-median qualifies as an approximate 1-median and also, since
there are so many other points, some are likely to have estimated
$\hat{\W}(u)< \hat{\W}(z)$ if we use a smaller sample.
On realistic instances, however, we would expect a larger separation
between the 1-median and most other points.  This 
would allow us to use fewer samples if we adaptively determine
the sample size.  Such an  approach was proposed in
\cite{binaryinfluence:CIKM2014} to identify
a node with approximate maximum marginal influence and similarly can
be applied here for the 1-median.

\ignore{
  Adaptive sampling is useful on realistic instances.  It
  exploits the fact that these instances are
typically ``easier''
in that the distribution of $\W(v)$ values is skewed and hence easier
to separate out.
We start with a sample
of size $O(\delta^{-2})$, compute estimates based on this sample,
and iteratively increase the sample size only if necessary.

More precisely, when  $\delta>\epsilon$,
we can stop when $z = \arg\min_u \hat{\W}(u)$
has $\W(z) \leq (1+\delta) \hat{\W}(z)$ and also that for all
$u\not=z$,
$\hat{\W}(u) >  (1+\delta)\hat{\W}(z)$.   It is possible to use
$\delta>\epsilon$ only when the data has a well-separated 1-median.
If there is another node $z'$ with $\W(z') \leq (1+\epsilon)\W(z)$, it would be
necessary to use $\delta\leq \epsilon$ to identify an approximate
1-median.
When working with $\delta \leq \epsilon$,
we can use a simplified stopping condition, only
requiring that $\W(z) \leq (1+\delta) \hat{\W}(z)$.

  This can be done with starting with a small value of $k$ and
  increasing the sample size as necessary, taking $\delta =
  1/\sqrt{k}$, using the same estimation coefficients.  As we
  increase $k$, the sample grows in a monotone way (as the use of same
  coefficients makes samples for   different $k$ are coordinated).

  More precisely, we
 increase $k$ smoothly while tracking the node with
minimum estimated $\hat{\W}(v)$ (and when $k \leq \epsilon^{-2}$ also
tracking the second smallest value).  We stop when the condition is satisfied.
 Correctness
follows by noting that using the estimated minimum can lead to large
under-estimates,
since expectation of the minimum can be much smaller than
minimum of expectations,  but the overestimate distribution still has a controlled
error.
}

\section{Conclusion}
  Weighted samples are often used as compact summaries of weighted
  data.  With weighted sampling, even of very skewed data, a
  PPS sample of size $\epsilon^{-2}$ would provide us with good
  estimates with \CV\ of $O(\epsilon)$ on the total sum of the population.
   The surprise factor of our result, which relies only on
   properties of metrics, is that we can design a single set of
  sampling probabilities, which we termed {\em universal PPS},  that
  forms a good weighted sample from the
  perspectives of {\em any} point in the metric space.  Moreover, we do so in an almost
  lossless way in terms of the sample size to estimation quality
  tradeoffs.  In particular, the sample size does not depend on the
  number of points $n$ or the dimension of the space.
Another perhaps surprising consequence of our results is that there is
a rank-1 matrix that approximates the PPS
probabilities of the full pairwise distances matrix.  The approximation
can be expressed as the outer product of two vectors, which
can be computed using a linear number of distance computations.


\bibliographystyle{plain}
\bibliography{cycle}

\end{document}